\newtheorem{theorem}{Theorem}
\newtheorem{lemma}{Lemma}
\newcommand{\Z}{\mathbb{Z}}
\newcommand{\N}{\mathbb{N}}
\newcommand{\J}{\mathbb{J}}
\newcommand{\F}{\mathbb{F}}
\newcommand{\R}{\mathbb{R}}
\renewcommand{\S}{\mathbb{S}}
\newcommand{\T}{\mathbb{T}}
\newcommand{\Cov}{\mathrm{Cov}}
\begin{document}

%\maketitle

\begin{center}{\LARGE Spectral Density Estimation for Random Fields  \\

\vspace{6pt}

via Periodic Embeddings }

\vspace{12pt}

{Joseph Guinness}
\vspace{12pt}

\textit{Cornell University, Department of Statistics and Data Science}
\vspace{24pt}

\textbf{Abstract}

\end{center}

\noindent We introduce methods for estimating the spectral density of a random field on a $d$-dimensional lattice from incomplete gridded data. Data are iteratively imputed onto an expanded lattice according to a model with a periodic covariance function. The imputations are convenient computationally, in that circulant embedding and preconditioned conjugate gradient methods can produce imputations in $O(n\log n)$ time and $O(n)$ memory. However, these so-called periodic imputations are motivated mainly by their ability to produce accurace spectral density estimates. In addition, we introduce a parametric filtering method that is designed to reduce periodogram smoothing bias. The paper contains theoretical results studying properties of the imputed data periodogram and numerical and simulation studies comparing the performance of the proposed methods to existing approaches in a number of scenarios. We present an application to a gridded satellite surface temperature dataset with missing values.

\section{Introduction}

Random fields defined on the integer lattice have wide applications for modeling gridded spatial and spatial-temporal datasets. They also form the basis for some models for non-gridded data \citep{nychka2015multiresolution}. The large sizes of modern spatial and spatial-temporal datasets impart an enormous computational burden when using traditional methods for estimating random field models. Modeling data on a grid provides a potential solution to the computational issue, since there exist some methods based on the discrete Fourier transform, which can be computed efficienty with fast Fourier transform algorithms. However, there are some pitfalls associated with discrete Fourier transform-based methods related to edge effects and the handling of missing data. This paper provides an accurate and computationally efficient estimation framework for addressing those issues.

Let $Y(x) \in \R$, $x\in \Z^d$ be a mean-zero stationary process on the $d$-dimensional integer lattice, that is, $E\{Y(x)\} = 0$ and $\Cov\{Y(x),Y(x+h)\} = K(h)$ for every $x$ and $h$ in $\Z^d$. Herglotz's theorem states that the covariance function has a Fourier transform representation,
\begin{align}\label{covariance}
K(h) = \int_{[0,1]^d} \exp(2\pi i\omega \cdot h) dF(\omega),
\end{align}
where $i = \sqrt{-1}$ and $\cdot$ is the dot product. The function $F$ is a spectral measure, and we assume throughout that it has a continuous derivative $f$ called a spectral density. We focus on estimation of $f$, which encodes the covariance function, and thus is crucial for prediction of missing values and for regressions when $Y$ is used as a model for residuals. We restrict our attention to stationary models and note that stationary models often form the basis for more flexible nonstationary models that are needed to accurately model many physical processes \citep{fuentes2002spectral}.

Suppose that we observe vector $U = \{Y(x_1),\ldots,Y(x_{n})\}$ at a distinct set of $n$ locations $\mathbb{S}_1 = \{ x_1,\ldots,x_{n}\}$. If $f$ or $K$ have known parametric forms, and we assume that $Y(x)$ is a Gaussian process, then we can use likelihood-based methods for estimating the parameters, which generally requires $O(n^2)$ memory and $O(n^3)$ floating point operations. If the locations form a complete rectangular subset of the integer lattice, we can use Whittle's likelihood approximation \citep{whittle1954stationary}, which leverages fast Fourier transform algorithms in order to approximate the likelihood in $O(n \log n)$ FLOPs and $O(n)$ memory. \cite{guyon1982parameter} showed that, due to edge effects, the Whittle likelihood parameter estimates are not root-$n$ consistent when the dimension $d$ of the field is greater than 1. \cite{dahlhaus1987edge} suggested the use of data tapers to reduce edge effects and proved that the tapered version of the likelihood approximation is asymptotically efficient when $d \leq 3$. \cite{stroud2017bayesian} and \cite{guinness2016circulant} suggested the use of periodic embeddings and demonstrated their accuracy in numerical studies. \cite{sykulski2016biased} introduced a de-biased Whittle likelihood.

If one is not willing to assume that $f$ or $K$ have known parametric forms, and if the data are observed on a complete rectangular grid, nonparametric methods can be used to estimate $f$. The standard approach uses the discrete Fourier transform,
\begin{align}\label{periodogram}
J(\omega) = \frac{1}{\sqrt{n}} \sum_{j=1}^n Y(x_j)\exp(-2\pi i\omega \cdot x_j),
\end{align}
and estimates the spectrum with a smoothed version of the periodogram $|J(\omega)|^2$,
\begin{align*}
\widehat{f}(\omega) = \sum_{\nu} |J(\nu)|^2 \alpha(\omega-\nu),
\end{align*}
where $\alpha$ is a smoothing kernel. Selection of the kernel bandwidth has been studied by several authors \citep{lee1997simple,ombao2001simple,lee2001stabilized}. Alternatively, one can smooth using penalized likelihoods \citep{wahba1980automatic,chow1985sieve,pawitan1994nonparametric} or smooth priors in a Bayesian setting \citep{zheng2009nonparametric}. \cite{politis1995bias} provided a method for reducing bias in the smoothed periodogram. \cite{heyde1993smoothed} studied asymptotic properties of the periodogram in an increasing domain setting, while \cite{stein1995fixed} studied them in an increasing resolution setting, noting the importance of data filtering. \cite{lim2008properties} considered the multivariate case.

The nonparametric methods discussed above apply when a complete dataset is available on a rectangular grid. However, even when available on a grid, spatial datasets often have many missing values; for example, it is common to encounter gridded satellite datasets with some values obscured by clouds. Missing values complicate two aspects of periodogram-based estimators. The first is that a surrogate for the missing values must be substituted; Fuentes (2007, Section 3) suggested replacing them with zeros and scaling the periodogram by the number of observed grid cells. Also of relevance is the extensive theoretical literature on spectral domain analysis for irregularly sampled spatial data  \citep{matsuda2009fourier,bandyopadhyay2009asymptotic,bandyopadhyay2015frequency,deb2017asymptotic,subbarao2018}, which can be applied to incomplete gridded data as well. All of these approaches use a discrete Fourier transform of the sampled data,
%\begin{align*}
%J(\omega) = \frac{1}{\sqrt{n}} \sum_{j=1}^n Y(x_j)\exp(-2\pi i\omega \cdot x_j) ,
%\end{align*}
which for gridded datasets is equivalent to the zero-infill approach in  Fuentes (2007, Section 3). Numerical comparisons between a zero-infill approach and our new approach are given in Section \ref{numericalsection}. A second problem for spatial data is that scattered missing values seriously disrupt the use of differencing filters. For example two-dimensional differencing at observed location $(j,k)$ can be applied only if observations at $(j+1,k)$, $(j,k+1)$, and $(j+1,k+1)$ are observed as well.

To address these issues, this paper introduces computationally efficient methodology for estimating the spectrum based on imputing missing values with conditional simulations and iteratively updating the spectrum estimate, in a similar vein as \cite{lee2009nonparametric} proposed for time series data. The novelty of our approach is that the missing values are imputed onto an expanded lattice under a covariance function that is periodic on the expanded lattice. These {periodic imputations} or {periodic conditional simulations} are convenient computationally, since circulant embedding and preconditioned conjugate gradient methods can be employed for efficient imputations, but their main appeal is their ability to produce accurate estimates via the amelioration of edge effects. We provide thorough numerical studies and theoretical results describing when the imputed data spectrum is expected to give an estimate with a smaller bias than the spectrum used for imputation, which suggests that existing spectral density estimates can be improved through periodic embedding.

The theoretical results provide a sound basis for the nonparametric estimation methods and give some insight into why the parametric methods in \cite{guinness2016circulant} perform so well in simulations. Additionally, this paper introduces a parametric filtering method based on fitting simple parametric models within the iterative method. The fitted parametric models can be used to filter the data, which is effective for reducing bias due to periodogram smoothing. Taken together, this work results in accurate and computationaly efficient methods for estimating spectral densities when the gridded data have arbitrary missingness patterns. We provide thorough numerical and simulation studies for the methods and demonstrate that even a small amount of lattice expansion provides substantial bias and correlation reduction. We apply the methods to a gridded but incomplete land surface temperature dataset.

\section{Methodology}\label{methodssection}

\subsection{Notation and Background}\label{background}

Let $y = (y_1,\ldots,y_d)$ with $y_i \in \N$, and define the hyperrectangle $\J_y \subset \Z^d$, where
\begin{align*}
\J_y = \{ (a_1,\ldots,a_d) \mid a_j \in \{1,\ldots,y_j\} \,\, \mbox{for all } j\}.
\end{align*}
If $d=2$, this is simply a rectangular lattice of size $(y_1,y_2)$. We assume that the observation locations $\S_1$ form a subset of $\J_y$, and so we call $\J_y$ the observation lattice.
%Since the processes we consider are stationary, we can assume without loss of generality that $\min_i x_{ij} = 1$ for every $j = 1,\ldots,d$.
Define $V$ to be the vector containing the process at the remaining locations $\J_y \setminus \S_1$. Throughout, we asssume that $Y(x)$ is missing at random, meaning that the missingness is potentially related to $x$ but not related to the value of $Y(x)$  \citep{little2014statistical}. This section describes several existing and new iterative methods for estimating a spectral density $f$. All of these methods proceed by updating the spectrum estimate at the $k$th iteration, $f_k$, to the next estimate, $f_{k+1}$. Although the specific updating formulas vary, we use the notation $f_{k+1}$ for all of them to keep the number of symbols manageable.

%The method in \cite{fuentes2007approximate} uses periodograms obtained by setting $V = \bm{0}$, i.e.\ from the vector $(U,\bm{0})$.

For time series data, \cite{lee2009nonparametric} proposed an iterative method for obtaining nonparametric estimates of the spectrum. Let ${E}_k$ denote expectation in the mean zero multivariate normal distribution for $(U,V)$ under $f_k$ with covariance given by \eqref{covariance}. Their method can be extended from one dimension to general dimensions as follows with the updating formula
\begin{align}\label{update1}
f_{k+1}(\omega) = \sum_{\nu \in \F_y} {E}_k \Big\{ |J(\nu)|^2 \, \big| \, U \Big\}\alpha(\omega - \nu), \end{align}
where $\F_y$ is the set of Fourier frequencies associated with a grid of size $y$. The procedure is then iterated over $k$ until convergence. Here, we use a smoothing kernel, but \cite{lee2009nonparametric} noted that any smoothing method can be applied. The conditional expectation of the periodogram under $f_k$ is computationally expensive, so
\cite{lee2009nonparametric} proposed replacing the expected value with an average over $L$ independent realizations of $V$ given $U$, as in
\begin{align}\label{iterator1}
f_{k+1}(\omega) = \sum_{\nu \in \F_y} \frac{1}{L} \sum_{\ell = 1}^L  |J^{(\ell)}(\nu)|^2  \alpha(\omega - \nu),
\end{align}
where $J^{(\ell)}$ is the discrete Fourier transform derived from $(U,V^{(\ell)})$, where $V^{(1)},\ldots,V^{(L)}$ are independent Gaussian conditional simulations of $V$ given $U$ under $f_k$.
Replacing the conditional expectation with a sample average is analogous to the approach taken in the iterative method in \cite{tanner1987calculation} for Bayesian estimation of parametric statistical models. In this case, using a sample average creates a convergence issue, in that the Monte Carlo error causes the spectra in \eqref{iterator1} to fluctuate indefinitely. In Subsection \ref{periodicimputationsection}, we propose an alternative averaging scheme, as well as imputation under a periodic model.
%The choice of $L$, the number of imputations per iteration, need not be held constant during the iterative procedure. We have found, similarly to \cite{tanner1987calculation}, that $L$ also need not be very large; in fact, we have achieved good results with $L = 1$.

\subsection{Periodic Imputation}\label{periodicimputationsection}

When $d>1$, edge effects become a prominent issue \citep{guyon1982parameter}; in particular, the Whittle likelihood can be interpreted as the exact likelihood for a model in which the field is periodic on the observation lattice \citep{guinness2016circulant}. Data tapers have been proposed to alleviate the issue, but tapering can lead to loss of information from data near the boundaries or near missing values. In this paper, we propose extending the hyperrectangle in each dimension and performing the imputations under a periodic approximation to the covariance function. Surprisingly, using the periodic approximation to the covariance function for the imputations, rather than the true covariance function, leads to improved spectral density estimates. This is demonstrated numerically in Section \ref{numericalsection}. Periodic models also facilitate straightforward implementation of circulant embedding techniques to simulate from the conditional distributions efficiently.

Let $\tau \geq 1$, and define $z_j = \lceil \tau y_j \rceil$ so that $z_j \geq y_j$ for $j = 1,\ldots,d$. Define $m=z_1 \cdots z_d$ to be the total number of locations in $\J_z$, which we refer to as the embedding lattice. Let $W$ denote the vector of missing values on $\J_z \setminus \S_1$ and $\widetilde{E}_k$ denote expectation in the mean-zero multivariate normal distribution for $(U,W)$ with covariance function $R_k(\cdot)$, defined as
\begin{align}\label{periodiccovariance}
R_k(h) = \frac{1}{m} \sum_{\omega \in \F_z} f_k(\omega) \exp(2\pi i\omega \cdot h), \quad h = (h_1,\ldots,h_d),
\end{align}
where $\F_z$ are the Fourier frequencies associated with $\J_z$. Note that for every $\omega \in \F_z$, the function $\exp(2\pi i \omega \cdot h)$ is periodic in $h_j$ with period $z_j$. This ensures that $R_k(\cdot)$ is periodic on $\J_z$ in each dimension and is not the integral Fourier transform of $f_k$ that appears in \eqref{covariance}. We refer to a draw of $W$ under $R_k(\cdot)$ as a periodic conditional simulation or a periodic imputation. Figure \ref{periodicsim} contains an example with $\tau = 1.15$.

\begin{figure}
\centering
\includegraphics[width=0.8\textwidth]{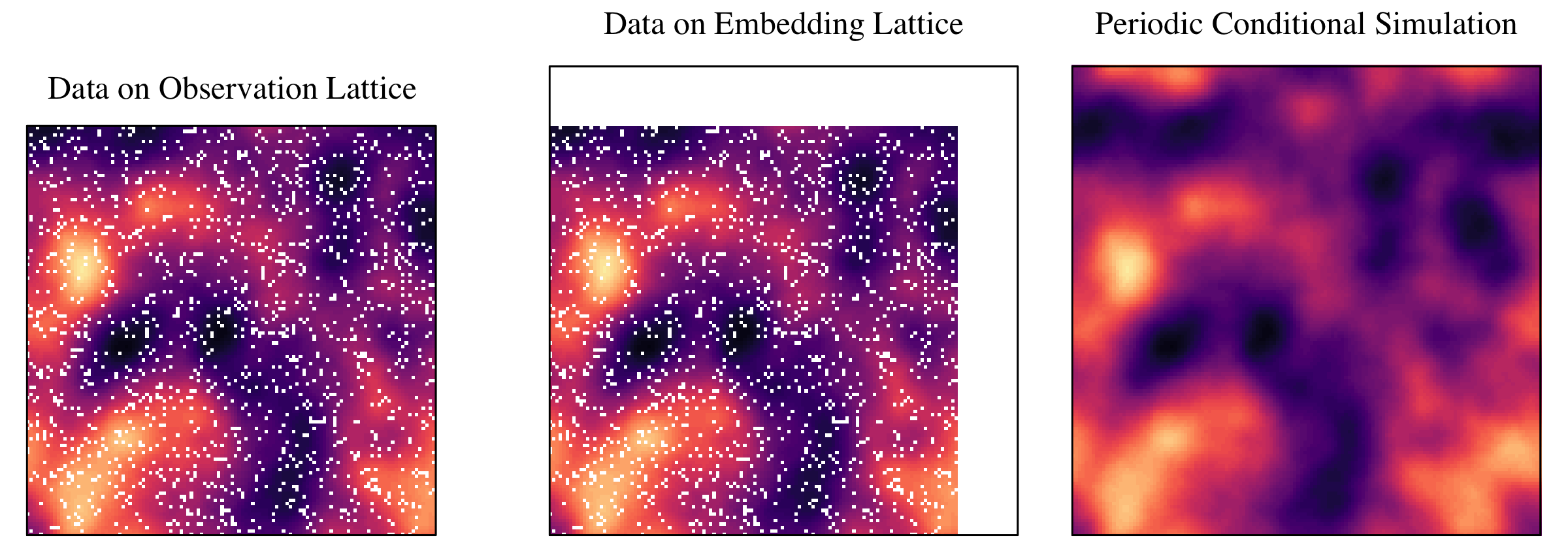}
\caption{Data on the observation lattice $\J_y$, the embedding lattice $\J_z$, and a periodic conditional simulation. \label{periodicsim} }
\end{figure}

Using conditional expectations, the update in the periodic model is
\begin{align}\label{update3}
f_{k+1}(\omega) = \sum_{\nu \in \F_z} \widetilde{E}_k \Big\{ |J(\nu)|^2 \, \mid \, U \Big\} \alpha(\omega - \nu).
\end{align}
Note that the conditional expectation in the \cite{lee2009nonparametric} estimator in Equation \eqref{update1} is done on the observation lattice and using the correct model, whereas in \eqref{update3} we use the conditional expectation under a model that is periodic on the embedding lattice. As before, the conditional expectation can be replaced by the average over one or several conditional simulations. To address the convergence issue mentioned in Subsection \ref{background}, we propose an alternative updating formula consisting of a burn-in period of $B$ iterations and convergence monitoring based on the asymptotic standard deviation of the complete data smoothed periodogram,
\begin{align*}
S_k(\omega) = \Big\{ \sum_{\nu \in \F_z} f_{k}(\nu)^2 \alpha(\omega - \nu)^2 \Big\}^{1/2}.
\end{align*}

Our full proposed estimation algorithm is as follows. Initialize $f_1(\omega)$ constant (flat spectrum), and given spectrum $f_k$, update as follows:
\begin{enumerate}
\item For $\ell = 1,\ldots,L$, conditionally simulate $W^{(\ell)}$ given $U$ under $f_k$,
\item For $\ell = 1,\ldots,L$, compute $J^{(\ell)}(\omega)$ from $(U,W^{(\ell)})$,
\item Update spectrum as
\begin{align}
\hspace{-0cm} f_{k+1}(\omega) = \left\lbrace \begin{array}{ll}
\displaystyle \frac{1}{L} \sum_{\ell=1}^L \sum_{\nu \in \F_z} |J^{(\ell)}(\nu)|^2 \alpha(\omega - \nu) & \quad k \leq B \\
\frac{k-B}{k-B+1} \displaystyle {f}_{k}(\omega) +  \textstyle \frac{1}{k-B+1} \displaystyle\frac{1}{L}  \sum_{\ell=1}^L \sum_{\nu \in \F_z} |J^{(\ell)}(\nu)|^2 \alpha(\omega - \nu) & \quad k > B .
\end{array} \right.
\end{align}
\end{enumerate}
The algorithm is stopped when
\begin{align*}
\max_{\omega \in \F_z} \frac{|{f}_{k+1}(\omega) - {f}_{k}(\omega)|}{S_{k}(\omega)} < \varepsilon.
\end{align*}

%For $k \geq B$, while
%update with:
%\begin{enumerate}
%\item For $\ell = 1,\ldots,L$, conditionally simulate $W^{(\ell)}$ given $U$ under $\overline{f}_{B+k}$,
%\item For $\ell = 1,\ldots,L$, compute $J^{(\ell)}(\omega)$ from $(U,W^{(\ell)})$,
%\item Update ${f}_{B+k+1}(\omega) = \frac{k}{k+1} \overline{f}_{B+k}(\omega) + \frac{1}{k+1}\frac{1}{L}\sum_{\ell=1}^L \sum_{\nu \in \F_z} |J^{(\ell)}(\nu)|^2 \alpha(\omega - \nu)$.
%\end{enumerate}
To summarize, during the $B$ burn-in iterations, we use the sample average version of Equation \eqref{update3}. After burn-in, the updating formula uses a weighted average of the previous spectrum and the current smoothed periodogram. Using a burn-in period avoids averaging over spectra from the first few iterations. Convergence is relative to the asymptotic standard deviation of the complete data smoothed spectrum and a tolerance criterion $\varepsilon$, that we take to be $0.05$ or $0.01$ in practice. We typically take $L=1$ in practice. Appendix \ref{circulantappendix} contains details on how circulant embedding and preconditioned conjugate gradient can be employed to efficiently compute the periodic conditional simulations.

\subsection{Variant with parametric filter}

Even if $|J(\omega)|^2$ is unbiased for $f(\omega)$, the smoothing step can introduce some bias in the spectral density estimate. For spectral densities with large dynamic range, data filters have been proposed to pre-whiten the data prior to smoothing \citep{stein1995fixed}. Missing data pose a challenge for data filters, but filters can be easily applied to the imputed data at each iteration. In this subsection, we propose a parametric filtering method that we show in simulations is successful in reducing smoothing bias.

Let $f_\theta$ be a parametric spectral density. The imputed data Whittle likelihood approximation is
\begin{align*}
\ell(\theta) = -\frac{m}{2}\log 2\pi - \frac{1}{2}\sum_{\omega \in \F_z} \left[ \log f_\theta(\omega) + \frac{\widetilde{E}_k\{ | J(\omega)|^2 \, | \, U\}}{f_\theta(\omega)} \right].
\end{align*}
Let $\theta_k$ be the maximizer of $\ell(\theta)$. Then update as
\begin{align}
{f}_{k+1}(\omega) = f_{\theta_k}(\omega) \sum_{\nu \in \F_z} \frac{\widetilde{E}_k\{|J(\nu)|^2 \, | \, U\}}{f_{\theta_k}(\nu)} \alpha(\omega - \nu).
\end{align}
As before, in practice we replace $\widetilde{E}_k\{|J(\nu)|^2 \, | \, U\}$ with a sample average that can be computed efficiently. The completely nonparametric variant is a special case with $f_\theta(\omega)$ constant. Using the parametric step in the smoothing serves to flatten the periodogram, which we show in simulation studies is helpful for reducing smoothing bias. This allows for the use of wider smoothing kernels, which reduces variance as well.

The parametric Mat\'ern covariance is a popular choice for modeling spatial data, and so we recommend using some form of the Mat\'ern for the parametric model. \cite{guinness2016circulant} described a quasi-Mat\'ern covariance, whose spectral density can be evaluated quickly without aliasing calculations. Based on their results, we recommend using the quasi-Mat\'ern in practice. A special case of the quasi-Mat\'ern is explored in Section \ref{numericalsection}.

\section{Theory}\label{theorysection}

This section studies bias in the imputed data periodogram and correlation in the imputed data discrete Fourier transform vector. We use the notation that $f$ is the true spectrum, and $f_1$ is a spectrum to be used for imputation. The theorem should be interpreted as a statement about how the discrete Fourier transform vector behaves given a particular imputation spectrum, not about the iterative procedure itself. Section \ref{numericalsection} contains a numerical exploration of the iterative procedure, and Section \ref{discussionsection} discusses issues related to the theoretical study of the iterative procedure.

Let $R$ (without parentheses) be the covariance matrix for $(U,W)$ under periodic covariance function $R(\cdot)$ in \eqref{periodiccovariance} with spectrum $f$. Partition $R$ as $[A \,\, B; B^T \,\, C]$, so that $A$ and $C$ are the covariance matrices for $U$ and $W$, respectively. Let $K$ denote the covariance matrix for $U$ under the true nonperiodic covariance function $K(\cdot)$ in \eqref{covariance}. Note that $R$ is $m\times m$, while $K$ is $n\times n$. Define $R_1$ to be the covariance matrix for $(U,W)$ under periodic covariance function $R_1(\cdot)$ with spectrum $f_1$, and define $A_1$, $B_1$, and $C_1$ accordingly. Throughout, we assume that both $f$ and $f_1$ are bounded above and below by positive constants. If $W_1$ is a periodic conditional simulation given observations $U$ under $f_1$, then the true covariance matrix for $(U,W_1)$ is
\begin{align}\label{uv1covariance}
S := \Cov\Big\{ \begin{pmatrix} U \\ W_1 \end{pmatrix} \Big\} =
\begin{pmatrix}
K & K A_1^{-1} B_1 \\ B_1^T A_1^{-1} K & \quad C_1 - B_1^T A_1^{-1} B_1 + B_1^T A_1^{-1} K A_1^{-1} B_1
\end{pmatrix}.
\end{align}
The matrix $S$ is a key object of study, and it is of interest to understand its Fourier spectrum. To this end,
define the $m\times 1$ vector $g(\nu)$ to have entries $m^{-1/2}\exp(2\pi i\nu \cdot x)$, where $\nu \in \F_z$ is a Fourier frequency, and $x \in \J_z$, with the entries of $g(\nu)$ ordered as they are in $(U,W)$. Define
\begin{align*}
\widetilde{f}_1(\nu,\omega) := E[\widetilde{E}_1\{J(\nu)J(\omega)^*|U\}] = g(\nu)^\dag S g(\omega),
\end{align*}
where $*$ is complex conjugate, and $\dag$ is conjugate transpose, so that $\widetilde{f}_1(\omega,\omega)$ is the Fourier spectrum of $S$ from which we construct our estimates of the spectrum. Likewise, we define $f(\nu,\omega) = f(\nu)$ if $\omega = \nu$ and $0$ if $\omega \neq \nu$. This notation is useful for succinct theorem statements and reflects the fact that the true bispectrum is zero off the diagonal for stationary models. It is of interest to study $\widetilde{f}_1(\nu,\omega) - f(\nu,\omega)$, which for $\omega = \nu$ corresponds to the bias of the periodogram, and for $\omega \neq \nu$ measures dependence in the periodogram, both of which should ideally be near zero.

The difference $\widetilde{f}_1(\nu,\omega) - f(\nu,\omega)$ will be exactly zero for every $\nu,\omega$ if and only if $S = R$ due to the uniqueness of the Fourier transform. Inspection of \eqref{uv1covariance} suggests that $S$ approaches $R$ if both $K$ and $A_1$ approach $A$. The entries of $K$ come from true covariance function $K(h)$, and the entries of $A$ come from periodic covariance function $R(h)$. To see when $K$ approaches $A$, consider the multidimensional Poisson summation formula,
\begin{align*}
R( h ) = \sum_{j \in \Z^d} K( h + j \circ z ) = K(h) + \sum_{j \neq 0} K( h + j \circ z ),
\end{align*}
where $j \circ z$ is the elementwise product $(j_1 z_1, \ldots, j_d z_d)$. This says that $R(h) - K(h)$--and thus $K - A$--approaches zero whenever $K(h + j\circ z)$ decays quickly enough, which can be ensured by placing smoothness conditions on the spectrum. We now state the main result.
%The first result establishes that when the covariance function is compactly supported, the imputation spectrum is correct, and the imputation lattice $\J_z$ is chosen to be large enough, then the imputed periodogram is exactly unbiased, and the bispectrum is exactly zero. This is in contrast to the usual non-imputation results on periodgrams, which typically have a bias of order $n^{-1/d}$ \citep{guyon1982parameter}, which can be reduced to order $n^{-1}$ with data tapers \citep{dahlhaus1987edge} but come with a variance tradeoff. All proofs are found in Appendix \ref{proofs}

%\begin{theorem}\label{compacttheorem}
%If $f_1 = f$, $K(h) = 0$ for all $\| h \| > h_0$, and $|z_j - y_j| > h_0$ for all $j \in \{1,\ldots,d\}$, then for every $\nu,\omega \in \F_z$,
%\begin{align*}
%    \widetilde{f}(\nu,\omega) - f(\nu,\omega) = 0.
%\end{align*}
%\end{theorem}

%\begin{theorem}\label{theoremB}
%If $f$ has $p$ continuous partial derivatives, $f_1 = f$, $y_j = O( n^{1/d} )$, and $z_j = O(\tau y_j)$ for $\tau > 1$, then for every $\nu,\omega \in \F_z$,
%\begin{align*}
%\widetilde{f}_1(\nu,\omega) - f(\nu,\omega)  = O(n^{-p/d+1} ).
%\end{align*}
%\end{theorem}

\begin{theorem}\label{theoremC} Let $f$ have $p$ continuous partial derivatives, $y_j = O( n^{1/d} )$, and $z_j = O(\tau y_j)$ for $\tau > 1$. Define $\Delta = \max_{\omega \in [0,1]^d}| f(\omega) - f_1(\omega)|$. Then for every $\nu,\omega \in \F_z$,
\begin{align*}
\widetilde{f}_1(\nu,\omega) - f(\nu,\omega) = O(n^{-p/d+1}) + O\Big\{\Delta \Big(\frac{m-n}{m}\Big)^{1/2}   \Big\},
\end{align*}
meaning that the difference contains two terms with the respective rates.
\end{theorem}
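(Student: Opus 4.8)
The plan is to reduce the statement to a bound on a single quadratic form and then split that form into an aliasing contribution and a contribution from using $f_1$ in place of $f$. First I would record that the full periodic covariance matrix $R$ is block-circulant and is therefore diagonalized by the orthonormal Fourier vectors: since $g(\nu)^\dag g(\omega)$ equals $1$ when $\nu=\omega$ and $0$ otherwise for $\nu,\omega\in\F_z$, and $R=\sum_{\lambda\in\F_z} f(\lambda)\,g(\lambda)g(\lambda)^\dag$, one gets $g(\nu)^\dag R g(\omega)=f(\nu,\omega)$ exactly. Hence $\widetilde f_1(\nu,\omega)-f(\nu,\omega)=g(\nu)^\dag(S-R)g(\omega)$, and it suffices to bound this quadratic form. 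I would then introduce the intermediate matrix $\Sigma$ obtained from \eqref{uv1covariance} by replacing every occurrence of $K$ with $A$; this $\Sigma$ is exactly the covariance of $(U,W_1)$ in the world where the true field is periodic with spectrum $f$, so that $S-\Sigma$ isolates the aliasing error $K-A$ while $\Sigma-R$ isolates the imputation-model error driven by $\Delta$. The decomposition $S-R=(S-\Sigma)+(\Sigma-R)$ produces the two claimed terms.

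For the aliasing term, a direct computation gives the rank-structured identity $S-\Sigma=P(K-A)P^T$ with $P=\begin{pmatrix} I\\ B_1^T A_1^{-1}\end{pmatrix}$, whence $|g(\nu)^\dag(S-\Sigma)g(\omega)|\le \|P\|^2\,\|K-A\|$ in spectral norm. Since $f$ and $f_1$ are bounded above and below, eigenvalue interlacing for principal submatrices of the circulant matrices $R,R_1$ gives $\|A_1^{-1}\|=O(1)$ and $\|B_1^T A_1^{-1}\|=O(1)$, so $\|P\|=O(1)$. The remaining work is to bound $\|K-A\|$. By the Poisson summation identity stated before the theorem, the $(j,k)$ entry equals $-\sum_{\ell\neq 0}K(h+\ell\circ z)$ with $h=x_j-x_k$; because the data lie in $\J_y$ while the embedding is $\J_z$, each coordinate satisfies $|h_r|\le y_r-1$, and for $\ell\neq 0$ the shifted argument has a coordinate of magnitude at least $z_r-|h_r|\ge(\tau-1)y_r\asymp(\tau-1)n^{1/d}$. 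Combined with the fact that $p$ continuous derivatives of $f$ force $K(h)$ to decay like a negative $p$th power of $\max_r|h_r|$, this makes every entry uniformly $O(n^{-p/d})$; the crude bound $\|K-A\|\le\|K-A\|_F\le n\cdot O(n^{-p/d})$ then yields $O(n^{-p/d+1})$.

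For the imputation-model term I would exploit that $\Sigma$ and $R$ share the same $U$-block $A$, so the upper-left block of $\Sigma-R$ vanishes. Writing $D=R_1-R$ with blocks $D_A,D_B,D_C$ and setting $L=B^T A^{-1}$, $L_1=B_1^T A_1^{-1}$, short manipulations give the first-order expressions $(\Sigma-R)_{UW}=AA_1^{-1}(D_B-D_A L^T)$ and $(\Sigma-R)_{WW}=D_C-L_1 D_A L_1^T$. Because $\|D\|=\max_\lambda|f_1(\lambda)-f(\lambda)|\le\Delta$ and every factor $AA_1^{-1}, L, L_1$ has spectral norm $O(1)$, both blocks have spectral norm $O(\Delta)$. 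Now partition $g(\nu)=(g_U(\nu);g_W(\nu))$; since each entry of $g$ has modulus $m^{-1/2}$, one has $\|g_U(\nu)\|=\sqrt{n/m}\le 1$ and, crucially, $\|g_W(\nu)\|=\sqrt{(m-n)/m}$. Because the upper-left block is zero, every surviving term of $g(\nu)^\dag(\Sigma-R)g(\omega)$ carries at least one factor $g_W$, so Cauchy--Schwarz against the $O(\Delta)$ block norms produces $O\{\Delta(m-n)^{1/2}m^{-1/2}\}$.

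The main obstacle is the uniform aliasing estimate in the second paragraph: it is exactly where the strict inequality $\tau>1$ is needed, since without the gap $z_r-y_r\gtrsim(\tau-1)n^{1/d}$ the lattice differences $h=x_j-x_k$ could approach the wrap-around distance $z_r$ and the aliasing sum would contain an $O(1)$ term, destroying the rate. I would therefore state the covariance-decay bound ``$p$ derivatives imply $K(h)=O(\max_r|h_r|^{-p})$'' precisely, and note that it requires $p>d$ both for the aliasing series to converge and for the rate to be $o(1)$; by contrast, the passage from the entrywise bound to the spectral-norm bound through the Frobenius norm, though lossy, already matches the target exponent and needs no refinement.
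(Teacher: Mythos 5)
Your proof is correct, and while it shares the paper's overall skeleton---your $\Sigma$ is exactly the paper's intermediate matrix $S_1$, the aliasing term has the identical structure $P(K-A)P^T$ with $P = \bigl(\begin{smallmatrix} I \\ B_1^T A_1^{-1}\end{smallmatrix}\bigr)$, and the $O(1)$ bounds on $A_1^{-1}$ and $A_1^{-1}B_1$ come from the same interlacing/Schur-complement facts---it diverges from the paper at the crucial $\Delta$-step, in an interesting way. The paper splits further, $S_1 - R = (S_1 - R_1) + (R_1 - R)$, expands the quadratic form of $S_1-R_1$ in the Fourier basis using $g_1(\nu)^\dag G + g_2(\nu)^\dag H = e(\nu)^T$, and relies on an explicit cancellation of $e(\nu)^T(D-D_1)e(\omega)$ against the standalone term $f_1(\nu,\omega)-f(\nu,\omega)$ coming from $R_1-R$; only after that cancellation does every surviving term carry a factor $\|g_2(\cdot)\|_2 = \sqrt{(m-n)/m}$. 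You realize the same cancellation structurally: computing $S_1 - R$ blockwise (your identities $(\Sigma-R)_{UW} = AA_1^{-1}(D_B - D_A L^T)$ and $(\Sigma-R)_{WW} = D_C - L_1 D_A L_1^T$ check out, and the submatrix bounds $\|D_A\|,\|D_B\|,\|D_C\|\le\Delta$ are valid since $R_1-R$ has eigenvalues $f_1(\lambda)-f(\lambda)$), you observe that the $UU$ block vanishes identically, so the dangerous $O(\Delta)$ diagonal contribution never appears and Cauchy--Schwarz immediately gives $O\{\Delta\sqrt{(m-n)/m}\}$. This is cleaner than the paper's route and makes transparent why $\sqrt{(m-n)/m}$ appears: it is precisely $\|g_W\|_2$. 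For the aliasing term you also differ mildly: the paper bounds $\|K-A\|_2$ by the row-sum norm, summing the K\"orner bound $|K(h)|\le Q_p(f)(\max_j|h_j|)^{-p}$ over hollow cubes $G_\ell$ with $\ell \ge \ell_{\min}\asymp n^{1/d}$ to get $O(\ell_{\min}^{d-p})$, whereas you bound each entry of $K-A$ by $O(n^{-p/d})$ (summing the alias series over shells, which as you correctly flag requires $p>d$) and pass through the Frobenius norm; this is cruder but lands on the same $O(n^{-p/d+1})$, and since the paper's hollow-cube series also converges only when $p>d$, you lose no generality---indeed your explicit statement of that condition is a point the paper leaves implicit.
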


The first term in the rate derives from the decay of the covariances $K(h)$. This term decays quickly with $n$ when the spectrum is smooth, and the dimension of the domain is small. The second concerns the proportion of missing values relative to the number of observed values, which, when small, overwhelms the fact that $A_1 \neq A$. The assumptions about how the observation grid grows with $n$ are standard assumptions that ensure that each dimension grows at the same rate with $n$. When the spectrum is smooth enough, the first decay rate is better than the usual $n^{-1/d}$ \citep{guyon1982parameter} or even $n^{-1}$ rates for the bias of the non-imputed periodograms. The proof is given in Appendix \ref{proofs}, along with intermediate results that assume a correct imputation spectrum.

The implication of the theorem is that when $n$ is large enough and $(m-n)/m$ is small enough, we can initialize the iterative algorithm with any estimate of the spectrum (e.g.\ \cite{fuentes2007approximate,matsuda2009fourier}), and one step in the iterative algorithm will decrease the bias relative to the initialized estimate. The theorem does not make any claims about convergence of the iterative algorithm; these issues are explored numerically in Section \ref{numericalsection}.
%The theoretical result does not address convergence of the iterative algorithm and so should be interpreted as how the spectrum behaves when $f_1$ is already well-behaved, since assumptions are placed on $f_1$.
%Section \ref{numericalsection} contains numerical studies suggesting that the theoretical results on the bias are conservative and simulation studies comparing the proposed estimates to existing ones.

\section{Numerical Studies and Simulations}\label{numericalsection}

% \begin{figure}
% \centering
% \includegraphics[width=0.4\textwidth]{figures/numstudy_tau.pdf}
% \includegraphics[width=0.4\textwidth]{figures/numstudy_iterative.pdf}
% \caption{Integrated normalized squared bias for imputation with correct spectrum (left) and iteratively upating the imputation spectrum on a $32 \times 32$ grid (right).}
% \label{numstudy}
% \end{figure}

To provide more insight about the behavior of the proposed estimation methods, we present a numerical study analyzing the bispectrum of the imputed data and simulation results comparing the proposed estimators to other spectral density estimators. The numerical study involves calculations of the bispectrum from covariance matrices and thus involves no simulated data. In the simulation study, we estimate the spectral densities on simulated datasets, which allows us to study sampling variability and the effect of smoothing on the estimated spectral densities. In both the numerical study and the simulation study, we consider data on square grids under three missingness settings shown in Figure \ref{simulation_examples}. The first setting has 30\% scattered missing values. The second setting has a missing block in the center of the grid, with roughly 30\% of the total missing. The third setting has no missing values.

\begin{figure}
\centering
\includegraphics[width=0.8\textwidth]{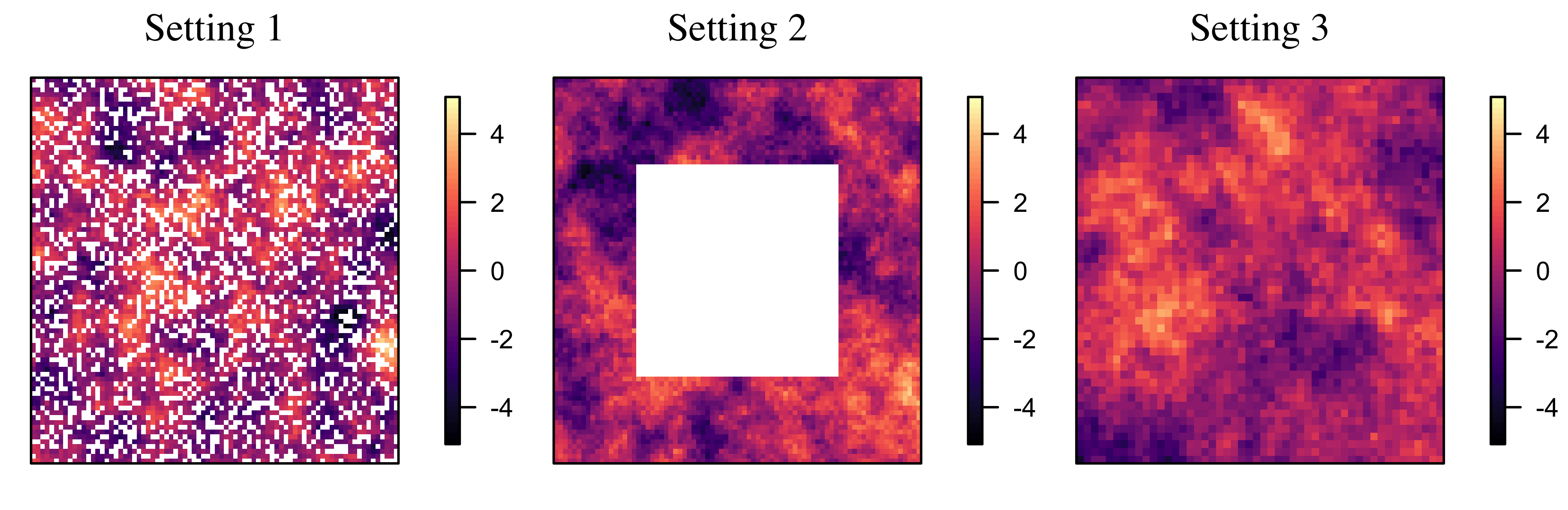}
\caption{Example realizations from the three missingness settings, with missing values in white. }
\label{simulation_examples}
\end{figure}

In the numerical study, we assume that the true covariance function is $K(h) = 2\exp(-\| h\|/8)$, with data on a $(32,32)$ grid.
%We compute $\widetilde{f}_1$ with $f_1$ equal to the true spectrum for several embedding sizes, including an option of zero embedding (i.e.\ $z=y$ or $\tau=1$).
%The results for the first numerical study are in the left panel of Figure \ref{numstudy}. When no embedding is used, the bispectrum has a substantial bias, but the bias quickly decays towards zero as the embedding size increases. Remarkably, the integrated normalized squared bias is essentially zero for $z_j - y_j > 4$, even though the exponential correlation range is twice that distance.
Let $f_0$ be the bispectrum of $K$, that is,
\begin{align*}
f_0(\nu,\omega) = \frac{m}{n} g(\nu)^\dag \begin{pmatrix} K & 0 \\ 0 & 0 \end{pmatrix} g(\omega).
\end{align*}
Then for $k \geq 0$, let $f_{k+1}(\nu,\omega) = g(\nu)^\dag S_{k} g(\omega)$, where
\begin{align*}
S_k = \begin{pmatrix} K & K A_k^{-1} B_k \\ B_k^T A_k^{-1} K & \quad C_k + B_k^{T} A_{k}^{-1} (K - A_k) A_k^{-1} B_k \end{pmatrix}.
\end{align*}
This numerical study mirrors a setting where we initialize the iterative procedure with the periodogram of the non-imputed data. This is repeated for four values of expansion factor $\tau \in \{32/32, 34/32, 36/32, 38/32\} = \{1,1.0625, 1.125,1.1875\}$. We quantify the error in the bispectrum with an integrated normalized squared bias
\begin{align*}
\frac{1}{m}\sum_{\nu \in \F_z} \sum_{\omega \in \F_z} \frac{ \Big\{{f}_k(\nu,\omega) - f(\nu,\omega) \Big\}^2 }{f(\nu,\nu) f(\omega,\omega)  }.
\end{align*}

The results for the integrated normalized squared bias are shown in Table \ref{numstudytable}. The column for iteration 0 corresponds to bias in the non-imputed data periodogram and has values that are quite large compared to the imputed data periodograms, especially in Setting 1. Rows 1 and 5 correspond to imputation of missing values on the original data domain (row 9 has no missing values). We see that imputing missing values on the original domain offers some improvement. However, imputing on an expanded domain gives biases that orders-of-magnitude smaller in many cases, and the biases decrease substantially in just a few iterations. It is also apparent that even a small amount of expansion lowers the bias; for example, expanding the domain by four pixels $(\tau = 36/32)$ gives biases near zero even though the spatial range parameter is twice as large as the domain expansion.

%plotted in the right panel of Figure \ref{numstudy}. The iterative procedure converges remarkably fast for the various choices of $\tau$. There is essentially no difference between $f_2$ and $f_3$. The accuracy also increases with the size of the embedding lattice, with $\tau$ as small as 1.1 giving accurate results.

\begin{table}
\begin{tabular}{ccccccccc}
& & \multicolumn{7}{c}{Iteration} \\
Setting & Expansion $\tau$ & 0 & 1 & 2 & 3 & 4 & 5 & 6 \\
\hline
  1 & 32/32  &  757.6  &  9.584  &  5.946  &  5.457  &  5.457  &  5.516  &  5.562 \\
  1 & 34/32  &  866.4  &  5.077  &  1.181  &  0.406  &  0.230  &  0.185  &  0.173 \\
  1 & 36/32  &  971.6  &  5.663  &  1.466  &  0.432  &  0.152  &  0.069  &  0.043 \\
  1 & 38/32  &  1083.3  &  6.332  &  1.933  &  0.638  &  0.228  &  0.090  &  0.040 \\
\hline
  2 & 32/32  &  27.20  &  8.622  &  8.305  &  8.201  &  8.161  &  8.144  &  8.136 \\
  2 & 34/32  &  24.87  &  0.613  &  0.279  &  0.222  &  0.210  &  0.206  &  0.206 \\
  2 & 36/32  &  27.99  &  0.494  &  0.133  &  0.059  &  0.040  &  0.035  &  0.033 \\
  2 & 38/32  &  31.40  &  0.531  &  0.146  &  0.052  &  0.024  &  0.015  &  0.011 \\
\hline
  3 & 32/32  &  7.990  &  7.990  &  7.990  &  7.990  &  7.990  &  7.990  &  7.990 \\
  3 & 34/32  &  5.489  &  0.231  &  0.201  &  0.200 &  0.200  &  0.200  &  0.200 \\
  3 & 36/32  &  6.297  &  0.083  &  0.035  &  0.031  &  0.031  &  0.031  &  0.031 \\
  3 & 38/32  &  7.180  &  0.079  &  0.016  &  0.010  &  0.009  &  0.009  &  0.009 \\
\end{tabular}
\caption{ \label{numstudytable} Integrated normalized squared bias under exponential covariance model, for three missingness settings, four expansion factors (including no expansion $\tau = 1$), and from 0 to six iterations. }
\end{table}

In the simulation study, we use an $(80,80)$ grid in Settings 1 and 2, and a $(50,50)$ grid in Setting 3. Data are generated from a mean-zero Gaussian process model with Mat\'ern covariance function
\begin{align*}
K(h) = \frac{2}{\Gamma(\nu) 2^{\nu-1}} \Big( \frac{\sqrt{2\nu}\|h\|}{8} \Big)^\nu \mathcal{K}_\nu \Big(\frac{\sqrt{2\nu}\|h\|}{8}\Big),
\end{align*}
with three different choices of smoothness parameter $\nu \in \{1/2,1,3/2\}$, range parameter 8, and variance parameter 2.

We consider several methods for estimating the spectral densities. The first method uses a smoothed periodogram computed from the discrete Fourier transform of the sampled data, scaled by the number of observations $n$. This method is described in the Introduction and is the approach suggested by \cite{fuentes2007approximate,matsuda2009fourier,bandyopadhyay2009asymptotic,bandyopadhyay2015frequency,deb2017asymptotic,subbarao2018}. The second method uses a periodogram computed from tapered data. We define one-dimensional cosine tapers $T_1$ and $T_2$ applied to 5\% of the observations on each of the two edges, and the taper function is the outer product $T((j,k)) = T_1(j)T_2(k)$. In Setting 1, the taper function is set to zero whenever there is a missing value. In Setting 2, which includes a square of missing values in the center, we also taper the interior observations. The periodograms of tapered data are normalized by the sum of the squares of the taper function. Additionally, we consider the \cite{lee2009nonparametric} estimator described in Section \ref{methodssection} (i.e.\ non-periodic imputation) and variants of their method that use lattice expansion and/or parametric filters. Using a non-periodic embedding method allows us to separate the effect of using a larger lattice from the effect of imputing periodically.

%\begin{figure}
%\centering
%\includegraphics[width=0.5\textwidth]{figures/tapered_data.png}
%\caption{Original and interior/exterior tapered data from Setting 2.}
%\label{tapered_data}
%\end{figure}

For the imputation-based methods proposed in this paper, we consider lattice expansion factors $\tau \in \{1.0,1.1,1.2\}$. We also consider two settings for the use of a parametric filter, the first being no filter, and the second with filter of the form
\begin{align}\label{spec_AR1}
f_\theta(\omega) = \frac{1}{1 - \frac{\theta}{2}\{ \cos(2\pi \omega_1) + \cos(2\pi \omega_2)\}},
\end{align}
where $0 \leq \theta < 1$. This choice for the parametric model is a member of the quasi-Mat\'ern family \citep{guinness2016circulant} and is deliberately misspecified for the two cases $\nu = 1/2$ and $\nu = 3/2$. \cite{lindgren2011explicit} showed that this model can approximate the Mat\'ern covariance with smoothness parameter equal to 1.

All of the imputation-based estimation methods use $L=1$ conditional simulations, $B = 100$ burn-in iterations, and use convergence criterion $\varepsilon = 0.01$. The estimate from the $j$th dataset is denoted by $\widehat{f}^{j}$. All methods use a Gaussian smoothing kernel proportinal to $\exp( - \| \omega - \nu \|^2/\delta^2 )$, where the distance $\| \omega - \nu \|$ is defined periodically on the domain $[0,1]^2$. We consider two metrics for evaluating the estimation methods. The first is a relative bias
\begin{align*}
\mbox{BIAS}(\omega) = \frac{1}{100} \sum_{j=1}^{100} \frac{ \widehat{f}^{j}(\omega) - f(\omega) }{f(\omega)},
\end{align*}
where $100$ is the total number of simulated datasets, and $f$ is the true spectrum. The second metric is a mean relative squared error
\begin{align*}
\mbox{MSE}(\omega) = \frac{1}{100} \sum_{j=1}^{100} \left\{ \frac{ \widehat{f}^{j}(\omega) - f(\omega) }{f(\omega)} \right\}^2.
\end{align*}

To evaluate relative bias on an equal footing, we compare all methods using a small value of $\delta = 0.02$. Figure \ref{biasplots} contains plots of the relative bias for the non-tapered and tapered methods, and non-filtered and filtered periodic and nonperiodic embedding methods with $\tau = 1.2$. Results for $\nu=1/2$ are shown (results for larger values of $\nu$ are similar). In Setting 1, the non-tapered and tapered methods have a very large relative bias at almost every frequency. They estimate far too much power at higher frequencies, due to the fact that imputing with zeros produces fields that are rougher than the underlying process. In contrast, the periodic embedding methods have small bias. In Setting 2, the non-tapered and tapered biases improve, but are still larger than the periodic embedding biases, especially for low frequencies. The relative biases for non-tapered and tapered methods are similar in Setting 3 and are still larger than the periodic embedding relative biases. Though not shown here, the biases for $\tau = 1.1$ and $1.3$ are similar. The parametric filters serve to reduce the bias compared to not filtering. The periodic embedding methods have a small bias near $\omega = (0,0)$; based on the accuracies shown in the numerical studies, this bias is likely due to smoothing bias because of the sharply-peaked spectra near the origin. Imputing nonperiodically does not substantially improve the bias in Settings 2 and 3. It does improve bias in Setting 1, but it is not as effective as periodic embedding.

To evaluate mean relative squared error on an equal footing, all methods were computed with a range of choices for $\delta$; the reported results are for the value of $\delta$ that minimized
\begin{align*}
\Big\{ \frac{1}{m} \sum_{\omega \in \F_z } \mbox{MSE}(\omega) \Big\}^{1/2} ,
\end{align*}
the root integrated mean relative squared error over all Fourier frequencies.
\begin{figure}
\centering
\includegraphics[width=0.9\textwidth]{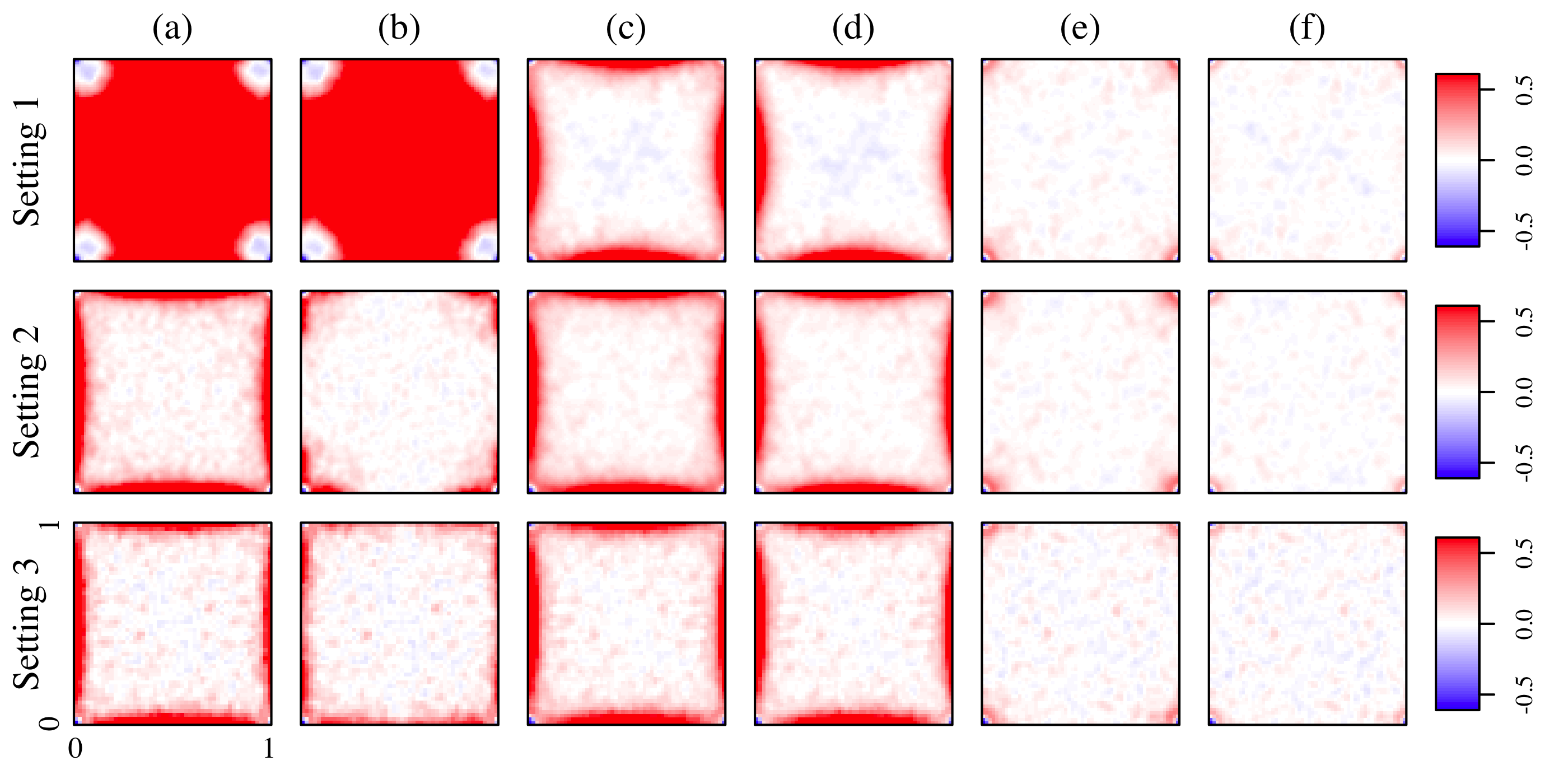}
\caption{Relative bias as a function of frequency for the three missingness settings under $\nu = 1/2$, and six estimation settings: (a) not tapered, (b) tapered, (c) $\tau = 1.2$, no filter, not periodic (d) $\tau = 1.2$, parametric filter, not periodic, (e) $\tau = 1.2$, no filter, periodic, (f) $\tau = 1.2$, parametric filter, periodic}
\label{biasplots}
\end{figure}
Table \ref{tab:RIMSE} contains root integrated mean relative squared error results for the various methods. The periodic embedding methods with $\tau > 1$ are more accurate than both the non-tapered and tapered periodogram estimates in every case. In Setting 1, the non-tapered and tapered estimates are quite poor, likely due to the large biases seen in Figure \ref{biasplots}. For periodic embedding, we see that the values improve when $\tau > 1$ but do not improve beyond $\tau = 1.1$. This is consistent with the numerical studies that showed a small amount of periodic embedding was sufficient. Filtering provides a further improvement, reducing the values by 30-40\%. In Setting 2, the non-tapered and tapered estimates improve substantially, and the periodic embedding methods offer further improvement. Imputing missing values is an improvement, but imputing periodically always gives better results than imputing non-periodically. This can be seen by comparing the $\tau = 1.0$ results to the $\tau > 1$ results and by comparing the periodic to the non-periodic imputation results. In Setting 3, the parametric filter performs similarly to tapering, but periodic embedding with parametric filtering is by far the most accurate method when $\tau > 1$.

%The methods developed in this paper show large improvements in Settings 1 and 2, where zero-infill methods did not perform well. The results for Setting 3 indicate that the periodic embedding and parametric filter methods are capable of offering large improvments over smoothed periodograms (or tapered data periodograms) even in a case where we expect these traditional methods to perform well. The improvements are magnified when the observed data are smooth (i.e.\ larger values of $\nu$); periodogram smoothing fails while periodically-imputed data periodograms perform similarly to the non-smooth data case. Parametric filtering is most effective when paired with periodic imputation

% \begin{table}
% \label{tab:RIMSE}
% \centering
% \begin{tabular}{c|c|ccc}
% \multicolumn{2}{c}{Estimation Settings} & \multicolumn{3}{c}{Simulation Setting} \\
% \cline{3-5}
% \multicolumn{2}{c}{} & 1 & 2 & 3 \\
% \cline{1-5}
% No Impute & No Taper & 3.2830 & 0.5172 & 0.4836 \\
% No Impute & Taper & 3.2226 & 0.2871 & 0.3324 \\
% \hline
% $\tau = 1.0$ & No Filter & 0.3030 & 0.4023 & 0.4836 \\
% $\tau = 1.0$ & Filter & 0.2408 & 0.2950 & 0.3517 \\
% $\tau = 1.1$ & No Filter & 0.2042 & 0.2182 & 0.2693 \\
% $\tau = 1.1$ & Filter & 0.1373 & 0.1351 & 0.1587 \\
% $\tau = 1.2$ & No Filter & 0.2060 & 0.2190 & 0.2615 \\
% $\tau = 1.2$ & Filter & 0.1366 & 0.1325 & 0.1553 \\
% $\tau = 1.3$ & No Filter & 0.2101 & 0.2208 & 0.2646 \\
% $\tau = 1.3$ & Filter & 0.1362 & 0.1316 & 0.1559
% \end{tabular}
% \caption{Integrated relative MSE results}
% \end{table}

\begin{table}
\small
\label{tab:RIMSE}
\centering
\begin{tabular}{cccccccccc}
 & \multicolumn{9}{c}{Missingness Setting} \\
 & 1 & 2 & 3 & 1 & 2 & 3 & 1 & 2 & 3 \\
\cline{2-10}
impute - filter - periodic & \multicolumn{3}{c}{$\nu = 1/2$} & \multicolumn{3}{c}{$\nu = 1$} & \multicolumn{3}{c}{$\nu = 3/2$} \\
\hline
no - no - no & 3.495 & 0.560 & 0.478 & 32.11 & 3.145 & 2.299 & 257.3 & 17.03 & 15.36 \\
no - taper - no & 3.498 & 0.291 & 0.342 & 31.83 & 0.472 & 0.900 & 255.8 & 0.920 & 3.735 \\
$\tau = 1.0$ - no - no & 0.389 & 0.423 & 0.462 & 1.913 & 2.093 & 2.294 & 9.107 & 12.49 & 15.36 \\
$\tau = 1.1$ - no - no & 0.362 & 0.379 & 0.412 & 1.734 & 1.930 & 2.097 & 8.691 & 11.05 & 12.28 \\
$\tau = 1.2$ - no - no & 0.397 & 0.402 & 0.439 & 1.858 & 2.098 & 2.313 & 9.669 & 11.88 & 13.35 \\
$\tau = 1.0$ - yes - no & 0.313 & 0.320 & 0.323 & 1.168 & 1.197 & 1.559 & 6.138 & 8.280 & 8.803 \\
$\tau = 1.1$ - yes - no & 0.284 & 0.296 & 0.296 & 1.001 & 1.041 & 1.260 & 5.775 & 6.469 & 7.511 \\
$\tau = 1.2$ - yes - no & 0.296 & 0.312 & 0.309 & 1.022 & 1.062 & 1.266 & 5.689 & 7.443 & 7.587 \\
$\tau = 1.0$ - no - yes & 0.367 & 0.423 & 0.462 & 1.684 & 2.092 & 2.294 & 8.418 & 12.48 & 15.36 \\
$\tau = 1.1$ - no - yes & 0.208 & 0.219 & 0.259 & 0.238 & 0.268 & 0.326 & 0.288 & 0.297 & 0.382 \\
$\tau = 1.2$ - no - yes & 0.205 & 0.223 & 0.262 & 0.247 & 0.256 & 0.309 & 0.281 & 0.280 & 0.353 \\
$\tau = 1.0$ - yes - yes & 0.253 & 0.319 & 0.323 & 0.908 & 1.195 & 1.559 & 5.348 & 8.266 & 8.803 \\
$\tau = 1.1$ - yes - yes & 0.136 & 0.145 & 0.153 & \textbf{0.096} & \textbf{0.088} & \textbf{0.108} & \textbf{0.141} & 0.141 & 0.166 \\
$\tau = 1.2$ - yes - yes & \textbf{0.133} & \textbf{0.143} & \textbf{0.153} & 0.097 & 0.091 & 0.109 & 0.142 & \textbf{0.137} & \textbf{0.156} \\
\end{tabular}
\caption{Root integrated mean relative squared error results}
\end{table}

\section{Application to Satellite Data}\label{datasection}

To illustrate the practical usefulness of the proposed methods, we analyze a gridded land surface temperature dataset. These data were used recently in \cite{heaton2017methods}, a study comparing various Gaussian process approximations. The data were originally collected by the Moderate Resolution Imaging Spectrometer (MODIS) on board the NASA Terra Satellite. The region is a grid of 500 by 300 locations in the latitudinal range of $34.295$ to $37.068$ and longitudinal range of $-95.912$ to $-91.284$, roughly 450 km by 300 km with grid spacing 1100m in the north/south direction and 900m in the east/west direction. The values in the dataset represent land surface temperature in degrees Celsius. The dataset has $105{,}569$ non-missing values, which are plotted in the top left panel of Figure \ref{predictions}. We can see that there is a distinct trend from the southeast to the northwest corner, so we include a linear trend in the mean function, estimated by generalized least squares.

%\begin{figure}
%\centering
%\includegraphics[width=0.6\textwidth]{figures/surfacetempdata.png}
%\caption{Surface temperature dataset in degrees Celsius. State boundaries in black.}
%\label{surfacetempdata}
%\end{figure}

We have found that $\varepsilon = 0.05$ is a reasonable convergence tolerance criterion, and we choose $B=30$ burn-in iterations. We use a cross-validation procedure to choose the smoothing parameter. A random subset of 30\% of the data is held out, and the iterative methods are run with a range of smoothing parameters, and the parameter that minimizes sum of squared prediction errors was chosen.

In Figure \ref{predictions}, we plot the original data, the conditional expectation, an estimate of the conditional standard deviations, and three conditional simulation plots. The conditional standard deviations are estimated by computing 30 conditional simulations, and computing the root mean squared difference between the conditional expectation and each of the conditional simulations at each pixel.
%Conditional simulations are a useful tool for propagating uncertainty of predictions through a broader analysis; for example, if land surface temperature is an input to a numerical pollution model, the pollution model can be run several times with an ensemble of conditionally simulated land surface temperature datasets in order to investigate the uncertainty introduced by predicting missing values.
On average, each conditional simulation took just 2.76 seconds and converged in 25 iterations with the Vecchia preconditioner, and took 15.48 seconds and converged in 159 iterations with the inverse spectrum preconditioner (See Appendix \ref{circulantappendix} for discussion of preconditioners). The iterative spectrum estimation method took 4.86 minutes to converge. While these timings indicate that the analysis is feasible on a large dataset, a zero-infill method is much faster, taking just 0.06 seconds. All timings are done on a 2016 Macbook Pro with 3.3 GHz Intel Core i7 processer (dual core) and with 16GB Memory, running R 3.4.2 linked to Apple's Accelerate BLAS libraries.
%Reported time is elapsed time from calls to \verb!proc.time!.

\begin{figure}
\centering
\includegraphics[width=\textwidth]{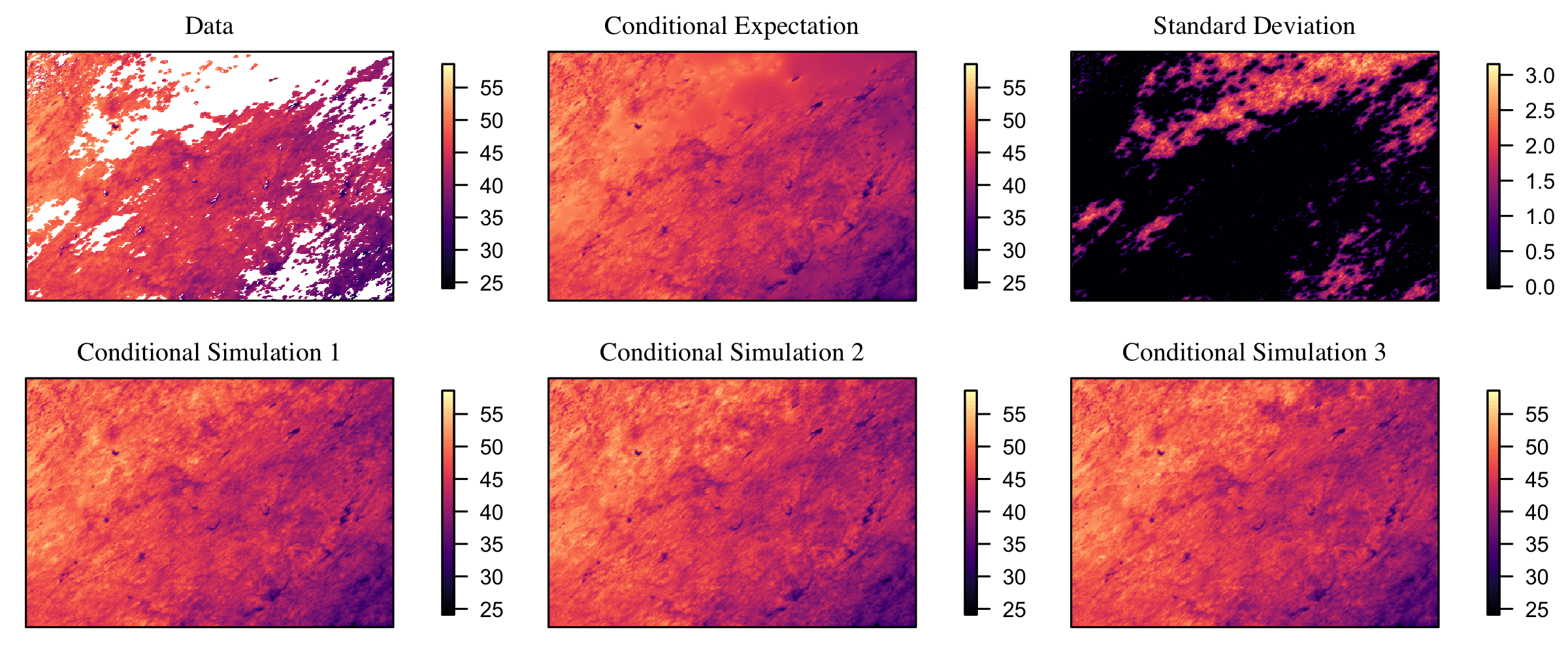}
\caption{Original data, predictions, standard deviations, and three conditional simulations of the missing values}
\label{predictions}
\end{figure}

Visually, the data appear to have a longer correlation lengthscale in the northeast-southwest direction than in the southeast-northwest direction. The estimate of the spectrum returned by the iterative method confirms our visual suspicions, as can be seen in Figure \ref{logspecden} from Appendix \ref{specdenplot}, where the logarithm of the estimated spectrum is plotted. The estimated spectrum shows clear signs of anisotropy in that the spectrum has contours that are not circular. Maximum likelihood estimation of anisotropic models is generally difficult due to optimization over additional parameters. %In general, it has been our experience that anisotropic parametric models are difficult to fit via maximum likelihood, due to the large number of parameters (typically 5 or 6 with the Mat\'ern covariance: variance, smoothness, two range parameters, anisotropy angle, and nugget), and due to periodicities in the parameter space.
In contrast, the nonparametric spectral density estimation methods automatically estimated the anisotropies with no extra computational effort.

The spectral methods described in this paper were included in the \cite{heaton2017methods} comparison project and compared favorably to all of the other methods on all of the prediction and timing metrics, and it was the best performing method for the interval score metric \citep{gneiting2007strictly}, which rewards forecasts that come with small prediction intervals that often contain the predictand. To gain some intuition for this result, we report some results for $(1-\alpha)\%$ prediction intervals based on a Gaussian assumption. In particular, we sort the predictions $(\widehat{Y}_1,\ldots,\widehat{Y}_p)$ to be increasing in the prediction standard deviation, and then report average prediction standard deviations for $(\widehat{Y}_i,\ldots,\widehat{Y}_j)$ for various ranges of indices $i$ and $j$. The results from the periodic spectral methods are compared in Table \ref{pred_comparison} to predictions that use an isotropic Mat\'ern covariance model, with parameters estimated via Vecchia's approximation \citep{vecchia1988estimation}, as implemented in the GpGp R package \citep{guinness2018permutation,gpgp}. Vecchia's approximation applies to parametric models and to both gridded and non-gridded data. We can see that while the two methods do not differ substantially for predictions that the model expects to be uncertain, the periodic spectral methods produce smaller prediction intervals and smaller root mean squared prediction error when the model expects small prediction errors. This is achieved with coverage rates that are larger than those produced by Vecchia's approximation with an isotropic model.

\begin{table}
\centering
\small
\begin{tabular}{cccccccc}
& Index Range & 1 & 501 & 1001 & 2001 & 10001 & 20000 \\
& & 500 & 1000 & 2000 & 10000 & 20000 & 44431 \\
\cline{1-8}
Periodic & Avg Pred SD & \textbf{0.365} & \textbf{0.427} & \textbf{0.482} & \textbf{0.694} & \textbf{1.164} & \textbf{1.88} \\
Spectral & Std.\ Dev.\ & 0.414 & 0.477 & 0.554 & 0.686 & 1.078 & 2.209 \\
 & 80\% Coverage & 81.14 & 82.06 & 83.15 & 84.96 & 85.34 & 73.53 \\
& 90\% Coverage & 86.56 & 89.23 & 89.29 & 91.35 & 92.61 & 84.59 \\
& 95\% Coverage & 91.47 & 92.82 & 93.08 & 94.72 & 95.99 & 91.19 \\
\cline{1-8}
Vecchia & Avg Pred SD & \textbf{0.501} & \textbf{0.548} & \textbf{0.585} & \textbf{0.749} & \textbf{1.198} & \textbf{1.876} \\
& Std.\ Dev.\ & 0.503 & 0.58 & 0.538 & 0.718 & 1.094 & 2.201 \\
& 80\% Coverage & 74.88 & 78.55 & 77.23 & 80.71 & 82.05 & 61.12 \\
& 90\% Coverage & 84.88 & 87.65 & 87.15 & 88.58 & 90.66 & 74.85 \\
& 95\% Coverage & 89.02 & 91.84 & 90.87 & 92.39 & 94.47 & 83.87 \\
\end{tabular}
\caption{\label{pred_comparison} Average prediction standard deviation and coverages for the specified range of predicted values, with the predicted values sorted according to the fitted models' prediction standard deviations. In other words, the first column corresponds to prediction results for the 500 predictions that the model expects to be most certain, and the last column corresponds to the predictions expected to be most uncertain.}
\end{table}

\section{Discussion}\label{discussionsection}

%The methods are demonstrated in numerical and simulation studies and in an application to interpolation of land surface temperature data. Periodic imputations are shown to have a remarkable ability to give highly accurate and nearly unbiased estimates of the spectral density, especially when combined with the parametric filtering method. Our new methods outperform the zero-infill methods described in the Introduction, especially when the locations of the missing values are completely random.

The methods involve choosing the factor by which the lattice should be expanded. We have found that even very small factors that expand the lattice by a few pixels are effective at improving the spectral density estimates. We recommend expanding each dimension by an amount roughly equal to the correlation range in the data. The fact that we expand the lattice in the positive direction--rather than in the negative direction or both positive and negative directions--is not important since we assume a periodic model on the expanded lattice. As with most nonparametric spectral density estimates, the methods involve the choice of a smoothing parameter. We have not attempted to provide any new methods for selecting smoothing parameters, as this issue has been well-studied in the literature. However, we note that the parametric filtering methods serve to flatten the periodogram, which makes the estimates less sensitive to the choice of smoothing parameter. In our application to land surface temperature data, we used a cross-validation procedure to select the smoothing parameter. Though we have chosen $L=1$ imputation per iteration in every example, the methods allow for $L>1$. We suspect that choosing $L>1$ would drive the iterative methods to converge in fewer iterations but incur a higher computational cost per iteration. Examining the details of this tradeoff would be an interesting study. It may be advantageous to use $L>1$ if the conditional simulations can be computed in parallel.

While many large datasets involve spatially gridded observations, we acknowledge that there is also a need for methods for analyzing non-gridded data. The nonparametric methods described in this paper may prove useful for analyzing non-gridded data as well; in fact \cite{nychka2015multiresolution} have a framework for analyzing non-gridded data that includes a lattice process as a model component. We have considered stationary models here. Stationary models can be used as components in nonstationary models \citep{fuentes2002spectral}, and so the methods developed here could potentially be extended to be used for local nonparametric estimation of nonstationary models as well. The paper contains some theoretical results about the iterative procedure, but proving that the iterative algorithm converges remains elusive, partly due to pathological cases in the observed vector $U$, but this is an important area of future work.

\vskip24pt

\begin{center}
\textbf{\Large Acknowledgements}
\end{center}

Guinness's research is supported by NSF grant No.\ 1613219 and NIH grant No.\ R01ES027892.

\appendix

\section{Circulant Embedding and Inverse Spectrum Preconditioner}\label{circulantappendix}

To see how the conditional simulations of $W$ given $U$ can be computed efficiently, define $R$ to be the covariance matrix for $(U,W)$ under covariance function $R(\cdot)$, and partition $R$ as
\begin{align*}
R = \begin{pmatrix} A & B \\ B^T & C \end{pmatrix},
\end{align*}
where $A$ and $C$ are the covariance matrices for the observations $U$ and missing values $W$, respectively. The conditional expectation for $W$ given $U$ is $\widetilde{E}(W|U) = B^T A^{-1} U$. The most demanding computational step for obtaining $\widetilde{E}(W|U)$ is solving the linear system $A x = U$. Preconditioned conjugate gradient methods for solving linear systems \citep{greenbaum1997iterative} are efficient when the forward multiplication $A x$ can be computed efficiently, and when we can find a matrix $M$, called the preconditioner, for which $M A \approx I$ and for which $M x$ can be computed efficiently. Below, we describe how circulant embedding can be used to compute the forward multiplication $Ax$ efficiently. In practice, we have found that a preconditioner based on Vecchia's Gaussian process approximation \citep{vecchia1988estimation} is effective and fast for the problems we have studied. This preconditioner was proposed in \cite{stroud2017bayesian}. At the end of this section, we give details about another preconditioner based on a submatrix of the inverse of $R^{-1}$.

%In the present situation, multiplying $A$ by $x$ can be
%done via circulant embedding \citep{wood1994simulation}, the details of which are spelled out in Appendix \ref{circulantappendix}. We have found that preconditioning methods based  on the reciprocal of the spectrum and on the Gaussian likelihood approximation due to \cite{vecchia1988estimation} works particularly well, which is consistent with the conclusions in \cite{stroud2017bayesian}. Section \ref{datasection} includes some timing comparisons.

Suppose that $Q$ is an $m \times m$ nested block circulant matrix. Nested block circulant includes the special cases of circulant, arising from a periodic and stationary covariance in one dimension, and block circulant with circulant blocks, arising from a periodic and stationary covariance in two dimensions. The matrix $Q$ can be written as $Q = F D F^\dag$, where $F$ is the discrete Fourier transform matrix, and $D$ is a diagonal matrix with the eigenvalues on the diagonal. Because of the discrete Fourier matrix representation, one can multiply $Q v$ in $O( m \log m )$ time and $O(m)$ memory by taking the discrete Fourier transform of $v$ (i.e.\ \verb!fft(v)! in R), then multiplying the entries of the resultant vector pointwise by the eigenvalues in $D$, then taking an inverse discrete Fourier transform of the result.

The multiplication $A x$ can be computed efficiently by embedding the multiplication inside of
\begin{align*}
R \begin{pmatrix} x \\ 0 \end{pmatrix} = \begin{pmatrix} A & B \\ B^T & C \end{pmatrix} \begin{pmatrix} x \\ 0 \end{pmatrix} = \begin{pmatrix} A x \\ B^T x \end{pmatrix}.
\end{align*}
Then the approriate entries $A x$ can be extracted, and the unnecessary entries $B^T x$ can be discarded. Note that $R$ is not nested block circulant, but there exists a row-column permutation of $R$ that is nested block circulant. Let $P$ denote the permutation matrix such that $Q = P R P^T$ is nested block circulant. Then the multiplication can be performed as
\begin{align*}
R \begin{pmatrix} x \\ 0 \end{pmatrix} = P^T Q P \begin{pmatrix} x \\ 0 \end{pmatrix}.
\end{align*}
Thus, the multiplication can be carried out by an appropriate reordering of $[x \,\,\, 0]$ in $O(m)$ time, then an $O(m \log m)$ time multiplication by nested block circulant $Q$, then an $O(m)$ time reordering of the result.

The preconditioner $M = (A - B C^{-1} B^T)^{-1}$ is a submatrix of $R^{-1}$. Here, we describe how the multiplication $M x$ can be performed efficiently without computing the entries of $M$. The inverse of $R$ is a permutation of a nested block circulant matrix and can be written as
\begin{align*}
R^{-1} = \begin{pmatrix}
(A - B C^{-1} B^T)^{-1} & -(A - B C^{-1} B^T)^{-1} B C^{-1} \\
- C^{-1} B^T ( A - B C^{-1} B^T)^{-1} & (C - B^T A^{-1} B )^{-1}
\end{pmatrix}
= P^T F D^{-1} F^\dag P.
\end{align*}
This means that the multiplication $M x$ can be embedded in the larger multiplication
\begin{align*}
R^{-1} \begin{pmatrix} x \\ 0 \end{pmatrix} = \begin{pmatrix} (A - B C^{-1} B^T )^{-1} x \\ - C^{-1} B^T(A - BC^{-1}B^T)^{-1} x \end{pmatrix} = P^T F D^{-1} F^\dag P \begin{pmatrix} x \\ 0 \end{pmatrix},
\end{align*}
and the multiplication can be carried out in $O(m \log m)$ time and $O(m)$ memory by a sequence of reorderings, discrete Fourier transforms, and pointwise multiplications.

\section{Software}

The methods are implemented in an R package titled ``npspec'' and made freely available through the author's public github page \verb!https://github.com/joeguinness/npspec!. The main function for estimating the spectrum is \verb!iterate_spec!, which takes the following arguments: \verb!y!, a matrix of data values; \verb!observed!, a logical-valued matrix of the same size as \verb!y!, with TRUE entries indicating non-missing values, and FALSE entries indicating missing values; \verb!embed_fac!, which is the embedding factor $\tau$; \verb!burn_iters!, the number of burn-in iterations (described below); \verb!par_spec_fun!, which currently takes values \verb!spec_AR1!, indicating that the function should filter using the spectrum in \eqref{spec_AR1}, or \verb!FALSE!, indicating that no parametric filtering should be used; \verb!kern_parm!, the kernel smoothing parameter $\delta$; \verb!precond_method!, either ``\verb!fft!'' for the inverse spectrum preconditioner or ``\verb!Vecchia!'' for the Vecchia preconditioner; and \verb!m!, the number of neighbors when the Vecchia preconditioner is used. In addition to returning an estimated spectrum \verb!spec!, \verb!iterate_spec! returns a conditional expectation map filling in the missing values, a conditional simulation map, and a Vecchia approximation to the likelihood.

Additionally, there is a function \verb!condexp! for computing predictions at the locations of the missing values. It has arguments \verb!y!, \verb!observed!, and \verb!spec!, the spectrum to be used for prediction. There is also a function \verb!condsim! for computing \verb!L! conditional simulations given data \verb!y!, observation locations \verb!observed!, and a spectrum \verb!spec!.

\section{Plot of Estimated Temperature Spectral Density}\label{specdenplot}

In Figure \ref{logspecden}, we plot the estimated spectral density from the surface temperature dataset.

\begin{figure}
\centering
\includegraphics[width=0.5\textwidth]{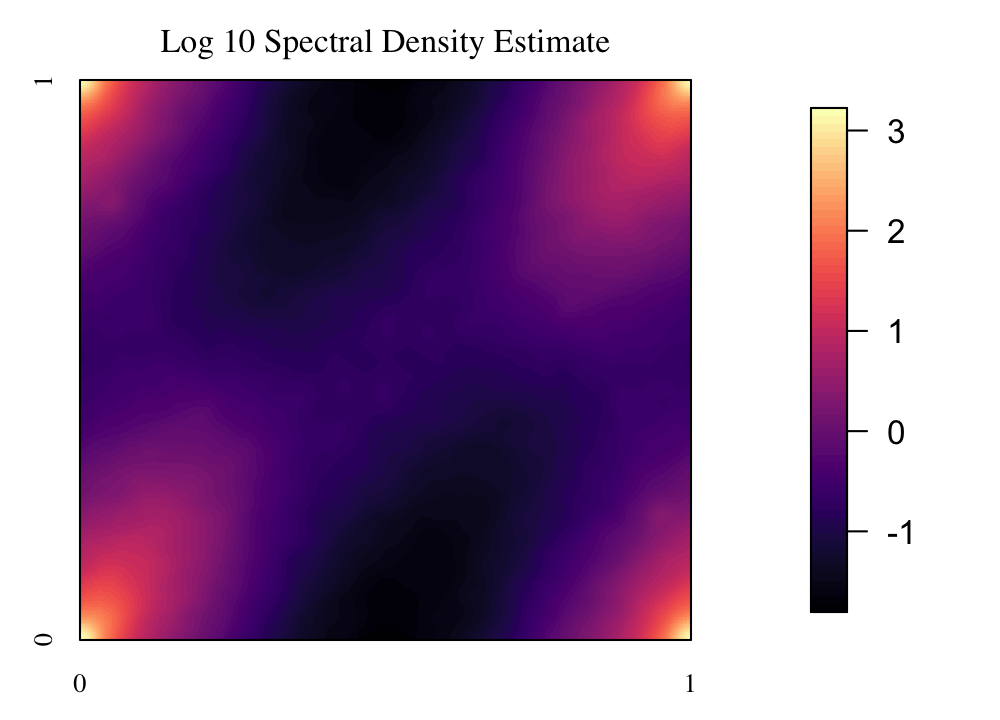}
\caption{Log base 10 of spectral density estimate.}
\label{logspecden}
\end{figure}

\section{Proofs of Theoretical Results}\label{proofs}

\begin{lemma}\label{compacttheorem}
If $f_1 = f$, $K(h) = 0$ for all $\| h \| > h_0$, and $|z_j - y_j| > h_0$ for all $j \in \{1,\ldots,d\}$, then for every $\nu,\omega \in \F_z$,
\begin{align*}
    \widetilde{f}_1(\nu,\omega) - f(\nu,\omega) = 0.
\end{align*}
\end{lemma}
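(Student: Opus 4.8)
The plan is to reduce the claim to the single matrix identity $S = R$, where $R$ is the periodic covariance matrix of $(U,W)$ under the true spectrum $f$. This reduction is exactly the equivalence noted just before the statement: since $R$ is (a permutation of) a nested block circulant matrix diagonalized by the orthonormal vectors $\{g(\nu)\}_{\nu \in \F_z}$ with eigenvalues $f(\nu)$, one has $g(\nu)^\dag R g(\omega) = f(\nu)\,\delta_{\nu\omega} = f(\nu,\omega)$, and therefore $\widetilde{f}_1(\nu,\omega) - f(\nu,\omega) = g(\nu)^\dag (S - R) g(\omega)$. Because $\{g(\nu)\}$ is an orthonormal basis, the matrices $g(\nu)g(\omega)^\dag$ span the space of $m\times m$ matrices, so this quantity vanishes for every $\nu,\omega$ precisely when $S = R$. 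Under the hypothesis $f_1 = f$ we have $R_1 = R$, hence $A_1 = A$, $B_1 = B$, $C_1 = C$, and \eqref{uv1covariance} collapses to
\begin{align*}
S = \begin{pmatrix} K & K A^{-1} B \\ B^T A^{-1} K & C - B^T A^{-1} B + B^T A^{-1} K A^{-1} B \end{pmatrix}.
\end{align*}
Thus it suffices to show this equals $R = \begin{pmatrix} A & B \\ B^T & C \end{pmatrix}$.

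The heart of the argument, and the step I expect to be the crux, is proving that the true covariance matrix $K$ of $U$ coincides with its periodic counterpart $A$. Both are indexed by pairs $x_j, x_k \in \S_1 \subseteq \J_y$, with $K_{jk} = K(x_j - x_k)$ and $A_{jk} = R(x_j - x_k)$. By the multidimensional Poisson summation formula quoted above, $R(h) - K(h) = \sum_{j \neq 0} K(h + j \circ z)$. For any lag $h = x_j - x_k$ arising from $\J_y$, each coordinate satisfies $|h_i| \leq y_i - 1$. Given $j \neq 0$, choose a coordinate $i$ with $j_i \neq 0$; then $|(h + j\circ z)_i| = |h_i + j_i z_i| \geq z_i - |h_i| \geq z_i - y_i + 1 > h_0$, using $z_i \geq y_i$ together with $|z_i - y_i| > h_0$. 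Consequently $\|h + j\circ z\| > h_0$, so $K(h + j\circ z) = 0$ by the compact support hypothesis, and every correction term in the Poisson sum vanishes. This yields $R(h) = K(h)$ on the entire relevant lag set, hence $A = K$.

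Finally, substituting $K = A$ into the reduced form of $S$ I would verify the three distinct blocks directly: the top-left block is $K = A$; the off-diagonal block is $K A^{-1} B = A A^{-1} B = B$; and the bottom-right block reduces to $C$ because $B^T A^{-1}(K - A) A^{-1} B = 0$. (Invertibility of $A$ is guaranteed since $f$ is bounded below by a positive constant, making $R$, and hence its principal submatrix $A$, positive definite.) These match $R$ block for block, so $S = R$, and the conclusion $\widetilde{f}_1(\nu,\omega) - f(\nu,\omega) = 0$ follows. The only genuinely delicate point is the coordinate bound in the Poisson step that forces every nonzero translate outside the support of $K$; once $A = K$ is established, the remaining reductions are purely algebraic.
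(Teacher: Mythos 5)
Your proof is correct and takes essentially the same approach as the paper's: both reduce the claim to showing $K = A$ (hence $S = R$) and then apply the multidimensional Poisson summation formula with the coordinate bound $|h_i + j_i z_i| \geq z_i - |h_i| \geq z_i - y_i + 1 > h_0$ to annihilate every nonzero translate of $K$. Your additional details (explicit block-by-block verification of $S = R$ after substituting $K = A$, and invertibility of $A$ from the positive lower bound on $f$) are correct steps the paper leaves implicit.
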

\begin{proof}
We have $f(\nu,\omega) = g(\nu)^\dag R g(\omega)$, and so $\widetilde{f}_1(\nu,\omega) - f(\nu,\omega) = g(\nu)^\dag (S - R) g(\omega)$. The matrix $S-R$ can be written as
\begin{align*}
S - R =
\begin{pmatrix}
K - A & (K - A)A^{-1}B \\
B^T A^{-1}(K-A) & B^T A^{-1}(K-A) A^{-1}B
\end{pmatrix}.
\end{align*}
It sufficies to show that $K = A$ in order to establish the result. According to the multidimensional Poisson summation formula, we can relate $K(\cdot)$ and $R(\cdot)$ as \citep[Lemma 1, Proved in Appendix C]{guinness2016circulant}
\begin{align}\label{poissonformula2}
R(x_1 - x_2) = \sum_{k \in \Z^d} K\{ x_1 - x_2 + (z \circ k) \} = K(x_1 - x_2) + \sum_{k \in \Z^d \setminus 0} K\{x_1 - x_2 + (z \circ k)\},
\end{align}
where $z \circ k := (z_1 k_1, \ldots, z_d k_d)$. For any $x_1,x_2 \in \J_y$ (the observation lattice), we have $|x_{1j} - x_{2j}| < y_j$ for every $j = 1,\ldots,d$. Thus if $k_j \in \Z \setminus 0$, $|x_{1j} - x_{2j} + z_j k_j| > |z_j - y_j | > h_0$. Thus at least one element of $x_1 - x_2 + (z \circ k)$ has absolute value greater than $h_0$ when $k \in \Z^d \setminus 0$, and so $\| x_1 - x_2 + (z\circ k) \| > h_0$ for all $k \in \Z^d \setminus 0$, implying that all terms in the sum in \eqref{poissonformula2} must be zero. This gives us $R(x_1 - x_2) = K(x_1 - x_2)$ for any $x_1,x_2 \in \J_y$, and so $K = A$.
\end{proof}

\vspace{12pt}

\begin{lemma}\label{theoremB}
If $f$ has $p$ continuous partial derivatives, $f_1 = f$, $y_j = O( n^{1/d} )$, and $z_j = O(\tau y_j)$ for $\tau > 1$, then for every $\nu,\omega \in \F_z$,
\begin{align*}
\widetilde{f}_1(\nu,\omega) - f(\nu,\omega)  = O(n^{-p/d+1} ).
\end{align*}
\end{lemma}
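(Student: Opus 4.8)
The plan is to reduce the bispectrum difference to a bilinear form in the single error matrix $E := K - A$ and then bound $E$ in the operator (spectral) norm $\|\cdot\|_{\mathrm{op}}$. Because $f_1 = f$ forces $R_1 = R$ (hence $A_1 = A$, $B_1 = B$, $C_1 = C$), the matrix $S$ of \eqref{uv1covariance} collapses to exactly the form displayed in the proof of Lemma \ref{compacttheorem}, giving the rank factorization
\begin{align*}
S - R = \begin{pmatrix} E & E A^{-1} B \\ B^T A^{-1} E & B^T A^{-1} E A^{-1} B \end{pmatrix} = G\, E\, G^T, \qquad G := \begin{pmatrix} I \\ B^T A^{-1} \end{pmatrix},
\end{align*}
where $G$ is $m \times n$ and $E$ is real symmetric. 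Setting $v_\omega := G^T g(\omega) \in \mathbb{C}^n$, I get $\widetilde{f}_1(\nu,\omega) - f(\nu,\omega) = g(\nu)^\dag (S-R) g(\omega) = v_\nu^\dag E\, v_\omega$, and therefore $|\widetilde{f}_1(\nu,\omega) - f(\nu,\omega)| \le \|E\|_{\mathrm{op}} \|v_\nu\| \|v_\omega\|$. It then suffices to prove that each $\|v_\omega\|$ is $O(1)$ and that $\|E\|_{\mathrm{op}} = O(n^{-p/d+1})$.

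For the first bound, note that $g(\omega)$ has unit norm (it has $m$ entries of modulus $m^{-1/2}$), so $\|v_\omega\| \le \|G\|_{\mathrm{op}}$, and $\|G\|_{\mathrm{op}}^2 = \|I + A^{-1} B B^T A^{-1}\|_{\mathrm{op}} = 1 + \|A^{-1}B\|_{\mathrm{op}}^2$. The matrix $R$ is nested block circulant with eigenvectors $g(\omega)$ and eigenvalues $f(\omega)$, $\omega \in \F_z$, so its spectrum lies in $[\min_\omega f(\omega), \max_\omega f(\omega)]$, an interval bounded away from $0$ and $\infty$ by hypothesis. Cauchy interlacing for the principal submatrix $A$ gives $\lambda_{\min}(A) \ge \min_\omega f(\omega) > 0$, so $\|A^{-1}\|_{\mathrm{op}}$ is bounded, while $\|B\|_{\mathrm{op}} \le \|R\|_{\mathrm{op}} = \max_\omega f(\omega)$ is bounded. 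Hence $\|A^{-1}B\|_{\mathrm{op}}$, and therefore $\sup_\omega \|v_\omega\|$, is $O(1)$.

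For the error bound, I would use symmetry of $E$ to pass to the maximum absolute row sum, $\|E\|_{\mathrm{op}} \le \max_i \sum_{x_j \in \S_1} |R(x_i - x_j) - K(x_i - x_j)|$. The Poisson summation identity \eqref{poissonformula2} gives $|R(h) - K(h)| \le \sum_{k \in \Z^d \setminus 0} |K(h + z\circ k)|$, and for $h$ in the difference set of $\J_y$ every shifted argument with $k \neq 0$ has at least one coordinate of magnitude at least $(\tau-1)y_j$, which is of order $n^{1/d}$. The key analytic input is covariance decay: since $K(t)$ is the $t$-th Fourier coefficient on $[0,1]^d$ of the periodic spectral density $f$, integrating by parts $p$ times in each coordinate (the boundary terms vanishing by periodicity) yields a product bound $|K(t)| \le C \prod_{j=1}^d (1 + |t_j|)^{-p}$. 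This makes the $k$-sum converge, with the $2d$ nearest shifts $k = \pm e_j$ dominating, so that $|R(h) - K(h)| = O(n^{-p/d})$ uniformly over the admissible $h$. Since each row has at most $n$ entries, $\|E\|_{\mathrm{op}} = O(n \cdot n^{-p/d}) = O(n^{-p/d+1})$, and combining with $\sup_\omega\|v_\omega\| = O(1)$ gives the claim.

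The main obstacle is the last step: turning smoothness of $f$ into the product decay $\prod_j (1+|t_j|)^{-p}$, rather than mere per-coordinate decay, is exactly what makes the Poisson sum over $k \in \Z^d \setminus 0$ converge in dimension $d \ge 2$, and it is the one place where the hypothesis on the $p$ partial derivatives of $f$ is genuinely used, via repeated integration by parts on the torus. Everything else---the factorization $S - R = G E G^T$, the interlacing bound on $\|G\|_{\mathrm{op}}$, and the max-row-sum estimate---is routine bookkeeping once that decay estimate is in hand.
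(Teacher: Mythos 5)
Your overall architecture matches the paper's proof almost exactly: the factorization $S-R = G(K-A)G^T$ with $G = [I \; ; \; B^TA^{-1}]$ (available because $f_1=f$ forces $A_1=A$, $B_1=B$), the $O(1)$ bound on the bilinear-form vectors via the spectral bounds $f_{\min} \le \lambda(R) \le f_{\max}$ together with interlacing for the principal submatrix $A$, the passage from $\|K-A\|_{\mathrm{op}}$ to a maximum row sum via symmetry, and the Poisson summation identity reducing each entry to a tail sum of $|K|$. The paper's vector bound is $\|g_1(\nu)+A^{-1}Bg_2(\nu)\|_2 \le \sqrt{n/m} + \sqrt{f_{\max}/f_{\min}}\sqrt{(m-n)/m}$, slightly sharper than your $\|G\|_{\mathrm{op}}$ estimate, but both are $O(1)$ and interchangeable here.

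The genuine gap is in your ``key analytic input.'' You claim $|K(t)| \le C\prod_{j=1}^d (1+|t_j|)^{-p}$ by integrating by parts $p$ times \emph{in each coordinate}; that requires continuity of the mixed derivative $\partial^{pd} f/\partial\omega_1^p\cdots\partial\omega_d^p$, of total order $pd$, which is strictly stronger than the hypothesis that $f$ has $p$ continuous partial derivatives (all partials of total order at most $p$). As stated, the product decay is not available, and your closing remark that it is ``genuinely needed'' for convergence of the $k$-sum in $d\ge 2$ is also incorrect. The hypothesis does yield, by integrating by parts $p$ times in a single coordinate (this is Lemma 9.5 of K\"orner, which the paper invokes), the bound $|K(t)| \le Q_p(f)\,|t_j|^{-p}$ for every $j$, hence $|K(t)| \le Q_p(f)(\max_j|t_j|)^{-p}$; and this max-coordinate decay already suffices. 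In your entrywise formulation: for $\max_j|k_j| = r \ge 2$ one has $\max_j|h_j + z_jk_j| \ge z_{\min}(r-1)$, and there are $O(r^{d-1})$ such shifts, so the $k$-sum converges whenever $p>d$ and is dominated by the $O(1)$ nearest shifts, giving the uniform entrywise bound $O(\ell_{\min}^{-p}) = O(n^{-p/d})$ and then $\|K-A\|_{\mathrm{op}} = O(n\cdot n^{-p/d})$ as you intended. (The paper reaches the same rate by summing $|K(h)|$ over the hollow cubes $G_\ell = \{h : \max_j|h_j| = \ell\}$, $|G_\ell| = O(\ell^{d-1})$, for $\ell \ge \ell_{\min}$, obtaining $O(\ell_{\min}^{d-p})$ directly; note its tail sum converges in exactly the same regime $p>d$, so you lose nothing by dropping the product bound.) With that one substitution your argument is correct and is essentially the paper's proof.
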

\begin{proof}

As in the proof of Lemma \ref{compacttheorem}, $\widetilde{f}_1(\nu,\omega) - f(\nu,\omega)= g(\nu)^\dag (S-R) g(\omega)$. Partitioning the vector $g(\nu)$ as $(g_1(\nu),g_2(\nu))$ according to the same partition as $(U,W)$, we have
\begin{align*}
\widetilde{f}_1(\nu,\omega) - f(\nu,\omega) &=\begin{pmatrix}
g_1(\nu)^\dag & g_2(\nu)^\dag
\end{pmatrix}
\begin{pmatrix}
K-A & (K-A)A^{-1}B \\
B^T A^{-1}(K-A) & B^T A^{-1}(K-A)A^{-1}B
\end{pmatrix}
\begin{pmatrix}
g_1(\omega) \\ g_2(\omega)^\dag
\end{pmatrix} \\
&= \{g_1(\nu)^\dag + g_2(\nu)^\dag B^T A^{-1} \}(K-A)\{g_1(\omega) + A^{-1}B g_2(\omega) \},
\end{align*}
and so the difference can be bounded as
\begin{align*}
|\widetilde{f}_1(\nu,\omega) - f(\nu,\omega)| &= |\{ g_1(\nu)^\dag + g_2(\nu)^\dag B^T A^{-1} \}(K-A)\{g_1(\omega) + A^{-1}B g_2(\omega) \}| \\
 & \leq \| g_1(\nu) + A^{-1}B g_2(\nu) \|_2 \, \| K-A \|_2  \, \| g_1(\omega) + A^{-1}B g_2(\omega) \|_2.
\end{align*}
We will consider each term in turn. Let $\rho(M)$ denote the spectral radius of symmetric matrix $M$. Then
\begin{align*}
\| g_1(\nu) + A^{-1}B g_2(\nu) \|_2 &\leq \|g_1(\nu)\|_2 + \| A^{-1} B g_2(\nu) \|_2 \\
 & \leq \sqrt{\frac{n}{m}} + \| A^{-1/2} \|_2 \, \| A^{-1/2} B \|_2 \, \| g_2(\nu) \|_2 \\
 & = \sqrt{\frac{n}{m}} + \rho( A^{-1} )^{1/2} \rho( B^T A^{-1} B )^{1/2} \| g_2(\nu) \|_2 \\
 & \leq \sqrt{\frac{n}{m}} + f_{\min}^{-1/2} \rho( C )^{1/2} \sqrt{\frac{m-n}{m}} \\
 & \leq \sqrt{\frac{n}{m}} + \sqrt{\frac{f_{\max}}{f_{\min}}} \sqrt{\frac{m-n}{m}},
\end{align*}
where in the second to last inequality, we used $\rho(A^{-1}) < \rho(R^{-1}) = f_{\min}^{-1}$, and $C - B^T A^{-1} B$ positive definite, so the largest eigenvalue of $B^T A^{-1} B$ is smaller than the largest eigenvalue of $C$, which is smaller than the largest eigenvalue of $R$, $f_{\max}$.

The previous inequality did not depend on $\nu$, so it holds for $\| g_1(\omega) + A^{-1} B g_2(\omega) \|_2$ as well. To bound $\| K - A \|_2$, we use the fact that for symmetric matrices $\| K - A \|_2 = \rho( K - A ) < \| K - A \|_1$, where
\begin{align*}
\| K - A \|_1 &= \max_{i} \sum_{j=1}^n | K_{ij} - A_{ij} | \\
    &= \max_{x_1 \in \J_y } \sum_{x_2 \in \J_y } \Big| K( x_1 - x_2 ) - R(x_1 - x_2) |\\
    &= \max_{x_1 \in \J_y } \sum_{x_2 \in \J_y } \Big| K(x_1 - x_2 ) - \sum_{k \in \Z^d} K\{ x_1 - x_2 + (k \circ z) \} \Big| \\
    &= \max_{x_1 \in \J_y } \sum_{x_2 \in \J_y } \Big| \sum_{k \neq \bm{0} } K\{ x_1 - x_2 + (k \circ z) \} \Big| \\
    & \leq \max_{x_1 \in \J_y } \sum_{x_2 \in \J_y }  \sum_{k \neq \bm{0} } \Big| K\{ x_1 - x_2 + (k \circ z) \} \Big|
\end{align*}
The third equality uses the multidimensional Poisson summation formula referenced in \eqref{poissonformula2}. By assumption, for $k = (k_1,\ldots,k_d) \neq \bm{0}$ and any $x_1,x_2 \in \J_y$,
\begin{align*}
\max_{1\leq j \leq d} |x_{1j} - x_{2j} + z_j k_j| \geq \min_{1\leq j \leq d} | z_j - y_j |.
\end{align*}
This is because $k_j \neq 0$ for at least one $j$. Define $\ell_{min} = \min_{1\leq j \leq d} |z_j - y_j|$, which is the embedding distance in the dimension with the smallest amount of embedding. By assumption, we have
\begin{align*}
\ell_{\min} > \min_{1\leq j \leq d} \lfloor (\tau - 1)y_j \rfloor > \min_{1\leq j \leq d} \lfloor (\tau - 1)a_j n^{1/d} \rfloor.
\end{align*}
This means the sum does not contain any terms $K(h)$ for which $\max_j |h_j| < \ell_{\min}$. Define the set $G_\ell = \{h : \max_{j} |h_j| = \ell \}$, which is a hollowed out cube on $\Z^d$ and has size $|G_\ell| = (2\ell+1)^d - (2\ell-1)^d = O(\ell^{d-1})$. Using this notation, the sum can be bounded as
\begin{align*}
\| K - A \|_1 \leq \sum_{\ell = \ell_{\min}}^\infty \sum_{ h \in G_\ell  } |K(h)|.
\end{align*}

Lemma 9.5 in \cite{korner1989fourier} states that if $f$ has $p$ continuous partial derivatives on $\T^d$, with maximum $p$th partial derivative $Q_p(f)$, then
\begin{align*}
|K(h)| \leq Q_p(f)  |h_j|^{-p}
\end{align*}
for every $1 \leq j \leq d$, and so we can use the bound $|K(h)| \leq Q_p(f) (\max_{1\leq j \leq d} |h_j|)^{-p}$. This gives us an explicit bound
\begin{align*}
\| K - A \|_1 & \leq \sum_{\ell = \ell_{\min}}^\infty \sum_{h \in G_\ell} \frac{Q_p(f)}{\ell^p} =
\sum_{\ell = \ell_{\min}}^\infty \Big\{ (2\ell+1)^d - (2\ell-1)^d \Big\} \frac{Q_p(f)}{\ell^p} \\
&= \sum_{\ell = \ell_{\min}}^\infty Q_p(f) \frac{ a_{d-1}(\ell) }{\ell^p},
\end{align*}
where $a_{d-1}(\ell)$ is a polynomial of degree $d-1$ in $\ell$. Then we have
\begin{align*}
\| K - A \|_1  = O( \ell_{\min}^{-p+d} ) = O( n^{-p/d+1} )
\end{align*}
since the largest exponent in $a_{d-1}(\ell)/\ell^p$ is $-p+d-1$. Combining this with $\| g_1(\nu) + A^{-1}Bg_2(\nu) \| = O(1)$ gives the desired result.
\end{proof}

\vskip12pt

\setcounter{theorem}{0}
\begin{theorem} Let $f$ have $p$ continuous partial derivatives, and assume the same conditions on the observation and embedding lattice as in Theorem \ref{theoremB}. Define $\Delta_1 = \max_{\omega \in [0,1]^d}| f(\omega) - f_1(\omega)|$. Then for every $\nu,\omega \in \F_z$,
\begin{align*}
\widetilde{f}_1(\nu,\omega) - f(\nu,\omega) = O(n^{-p/d+1}) + O\Big(\Delta_1 \sqrt{\frac{m-n}{m}}   \Big),
\end{align*}
meaning that the difference contains two terms with the respective rates.
\end{theorem}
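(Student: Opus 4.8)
The plan is to separate the misspecification effect ($f_1 \neq f$) from the covariance-truncation effect already controlled by Lemma \ref{theoremB}. Let $S_f$ denote the matrix obtained from \eqref{uv1covariance} by replacing $A_1,B_1,C_1$ with their correctly specified counterparts $A,B,C$; that is, $S_f$ is the true covariance of $(U,W_1)$ when the imputation is carried out under the correct spectrum $f$. Writing $\widetilde{f}_1(\nu,\omega) - f(\nu,\omega) = g(\nu)^\dag(S-R)g(\omega)$ and inserting $S_f$, I would split
\[
g(\nu)^\dag(S - R)g(\omega) = g(\nu)^\dag(S - S_f)g(\omega) + g(\nu)^\dag(S_f - R)g(\omega).
\]
The second summand is exactly the quantity appearing in Lemma \ref{theoremB} (the case $f_1 = f$), hence is $O(n^{-p/d+1})$. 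Everything then reduces to showing $g(\nu)^\dag(S-S_f)g(\omega) = O(\Delta_1\sqrt{(m-n)/m})$.

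The engine of that bound is that $R$ and $R_1$ are both nested block circulant under the same permutation, so they are simultaneously diagonalized by the Fourier vectors $g(\nu)$ with eigenvalues $f(\nu)$ and $f_1(\nu)$; therefore $\|R - R_1\|_2 = \max_{\nu \in \F_z}|f(\nu) - f_1(\nu)| \leq \Delta_1$. Since $A,B,C$ and $A_1,B_1,C_1$ are corresponding submatrices, each of $\|A - A_1\|_2$, $\|B - B_1\|_2$, $\|C - C_1\|_2$ is at most $\Delta_1$. The uniform lower bounds on $f,f_1$ give $\|A^{-1}\|_2, \|A_1^{-1}\|_2 = O(1)$, and the uniform upper bounds give $\|B\|_2, \|K\|_2 = O(1)$. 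Feeding these into the resolvent identity $A_1^{-1} - A^{-1} = A_1^{-1}(A - A_1)A^{-1}$ together with the splitting $A_1^{-1}B_1 - A^{-1}B = A_1^{-1}(B_1 - B) + A_1^{-1}(A - A_1)A^{-1}B$ yields $\|A_1^{-1}B_1 - A^{-1}B\|_2 = O(\Delta_1)$; the same manipulations give an $O(\Delta_1)$ bound for the difference of the two $(2,2)$ blocks.

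Now $S - S_f$ has a zero $(1,1)$ block, off-diagonal blocks equal to $K(A_1^{-1}B_1 - A^{-1}B)$ and its transpose, and a $(2,2)$ block of spectral norm $O(\Delta_1)$ by the previous step. Partitioning $g(\nu) = (g_1(\nu),g_2(\nu))$ as in Lemma \ref{theoremB}, and using $\|g_1(\nu)\|_2 = \sqrt{n/m} \leq 1$ together with $\|g_2(\nu)\|_2 = \sqrt{(m-n)/m}$, I would bound the two cross terms by $\|g_1\|_2\,\|K\|_2\,\|A_1^{-1}B_1 - A^{-1}B\|_2\,\|g_2\|_2 = O(\Delta_1\sqrt{(m-n)/m})$ and the $(2,2)$ term by $\|g_2\|_2^2$ times its block norm, i.e.\ $O(\Delta_1(m-n)/m)$. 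Since $(m-n)/m \leq \sqrt{(m-n)/m}$, the $(2,2)$ contribution is absorbed, and the cross terms—each carrying exactly one factor of $g_2$—set the rate $O(\Delta_1\sqrt{(m-n)/m})$. Adding the Lemma \ref{theoremB} term completes the claimed bound.

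The main obstacle is the bookkeeping in the last two steps: correctly reading off the blocks of $S - S_f$ from \eqref{uv1covariance} and propagating $\|R - R_1\|_2 \leq \Delta_1$ through the products and inverses without picking up any factor that grows with $m$ or $n$. The uniform two-sided bounds on $f$ and $f_1$ are exactly what keep all auxiliary spectral norms $O(1)$, and the factor $\sqrt{(m-n)/m}$ enters only through $\|g_2\|_2$; tracking which terms carry one versus two copies of $g_2$ is what distinguishes the dominant $\sqrt{(m-n)/m}$ rate from the negligible $(m-n)/m$ terms.
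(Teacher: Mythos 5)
Your proposal is correct, and it reaches the stated bound by a genuinely different decomposition than the paper's. The paper inserts the intermediate matrix $S_1$, obtained from $S$ in \eqref{uv1covariance} by replacing $K$ with $A$ (misspecified imputation, but periodic ``truth''), and splits $S - R = (S - S_1) + (S_1 - R_1) + (R_1 - R)$; the term $S - S_1$ is handled by re-running the Lemma \ref{theoremB} argument with $A_1^{-1}B_1$ in place of $A^{-1}B$, while the $\Delta_1$ term comes from an explicit Fourier-domain cancellation: writing $A - A_1 = G(D - D_1)G^\dag$ and using $g_1(\nu)^\dag G + g_2(\nu)^\dag H = e(\nu)^T$, the term $e(\nu)^T(D-D_1)e(\omega)$ cancels against $f_1(\nu,\omega) - f(\nu,\omega)$, so every surviving term carries a factor $g_2$ and hence $\sqrt{(m-n)/m}$. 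You instead insert $S_f$ (correct imputation spectrum, nonperiodic truth) and split $S - R = (S - S_f) + (S_f - R)$, so that your second summand is literally the $f_1 = f$ case covered by Lemma \ref{theoremB}, and the misspecification error $S - S_f$ has an identically zero $(1,1)$ block; that zero block is the structural counterpart of the paper's cancellation, forcing at least one copy of $g_2$ into each surviving term. Your supporting estimates are sound: $R$ and $R_1$ are simultaneously diagonalized by the same permuted DFT basis, so $\|R - R_1\|_2 \leq \Delta_1$, and the submatrix bounds $\|A-A_1\|_2, \|B-B_1\|_2, \|C-C_1\|_2 \leq \Delta_1$ follow; eigenvalue interlacing gives $\|A^{-1}\|_2 \leq f_{\min}^{-1}$ and $\|A_1^{-1}\|_2 \leq f_{1,\min}^{-1}$; and $\|K\|_2 \leq f_{\max}$ holds by the standard quadratic-form/Parseval argument since the observation sites are distinct lattice points (this last fact is needed in your cross terms but never in the paper's proof, where $K$ enters only through $K - A$, so do state it). The $(2,2)$-block bookkeeping you defer to ``the same manipulations'' is routine but should record $\|A^{-1}B\|_2, \|A_1^{-1}B_1\|_2 = O(1)$ explicitly. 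On balance, your route is more modular and elementary—it reuses Lemma \ref{theoremB} verbatim and stays entirely in matrix-perturbation territory—whereas the paper's computation exposes the cancellation explicitly in the frequency domain and yields concrete constants such as $\sqrt{1 + f_{1,\max}/f_{1,\min}}$; both give the same two-term rate $O(n^{-p/d+1}) + O\big(\Delta_1 \sqrt{(m-n)/m}\big)$.
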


\begin{proof}
Define the matrix $S_1$ as
\begin{align*}
S_1 = \begin{pmatrix}
A & A A_1^{-1} B_1 \\
B_1^T A_{1}^{-1}A & C_1 - B_1^T A_1^{-1} B_1 + B_1^T A_1^{-1} A A_1^{-1} B_1
\end{pmatrix},
\end{align*}
which differs from $S$ in that $K$ in $S$ is replaced by $A$ in $S_1$. The difference $\widetilde{f}_1(\nu,\omega) - f(\nu,\omega)$ can be written as
\begin{align}\label{telescope2}
g(\nu)^\dag(S-R)g(\omega) = g(\nu)^\dag (S - S_1) g(\omega) + g(\nu)^\dag( S_1 - R_1 + R_1 - R)g(\omega).
\end{align}
The first term in \eqref{telescope2} is
\begin{align*}
\delta_1 &:=\begin{pmatrix}
g_1(\nu)^\dag & g_2(\nu)^\dag
\end{pmatrix}
\begin{pmatrix}
K-A & (K-A)A_1^{-1}B_1 \\
B_1^T A_1^{-1}(K-A) & B_1^T A_1^{-1}(K-A)A_1^{-1}B_1
\end{pmatrix}
\begin{pmatrix}
g_1(\omega) \\ g_2(\omega)^\dag
\end{pmatrix} \\
&= \{g_1(\nu)^\dag + g_2(\nu)^\dag B_1^T A_1^{-1} \}(K-A)\{g_1(\omega) + A_1^{-1}B_1 g_2(\omega) \}.
\end{align*}
This expression has a similar form to that which appears in the proof of Theorem \ref{theoremB}. As before, we need bounds for $\| g_1(\omega) + A_1^{-1}B_1 g_2(\omega) \|_2$ and $\| K-A\|_2$ in order to bound $\delta_1$. The proof for the bound on $\| K - A \|_2$ is identical to that in Theorem \ref{theoremB}, and the proof for the bound on $\|g_1(\omega) + A_1^{-1}B_1 g_2(\omega) \|_2$ is similar, although $f$ is replaced by $f_1$, which does not change the overall result that the first term in \eqref{telescope2} is $O(n^{-p/d+1})$.

To shorten the equations to follow, write $M_1 = A_1^{-1} B_1$. The second term in \eqref{telescope2} is
\begin{align*}
\delta_2 &:=\begin{pmatrix}
g_1(\nu)^\dag & g_2(\nu)^\dag
\end{pmatrix}
\begin{pmatrix}
A-A_1 & (A-A_1)M_1 \\
M_1^T(A-A_1) & M_1^T (A-A_1)M_1
\end{pmatrix}
\begin{pmatrix}
g_1(\omega) \\ g_2(\omega)^\dag
\end{pmatrix} + f_1(\nu,\omega) - f(\nu,\omega) \\
&= \{ g_1(\nu)^\dag + g_2(\nu)^\dag M_1^T \}(A-A_1)\{ g_1(\omega) + M_1 g_2(\omega)\} + f_1(\nu,\omega) - f(\nu,\omega).
\end{align*}
Define the discrete Fourier transform matrix $F$ to have $(j,k)$ entry $m^{-1/2} \exp(i \omega_k \cdot x_j)$, where $\omega_k$ is a Fourier frequency in $\F_z$, and $x_j$ is a location in $\J_z$. Partition the discrete Fourier transform matrix $F$ into rows for the observations and missing values as $F^\dag = [G^\dag \, H^\dag]$. We have $R = F D F^\dag$, where $D$ is diagonal with entries $f(\nu,\omega)$. This gives $A = G D G^\dag$, and likewise $A_1 = G D_1 G^\dag$, where $D_1$ is diagonal with entries $f_1(\nu,\omega)$. Then $\delta_2$ can then be written as
\begin{align*}
\delta_2 = \{ g_1(\nu)^\dag G + g_2(\nu)^\dag M_1^T G \}(D-D_1)\{ G^\dag g_1(\omega) + G^\dag M g_2(\omega)\} + f_1(\nu,\omega) - f(\nu,\omega).
\end{align*}
Note that $I = G^\dag G + H^\dag H$. Since $g_1(\nu)^\dag$ is a row of $G^\dag$, and $g_2(\nu)^\dag$ is the same row of $H^\dag$ we have
\begin{align*}
g_1(\nu)^\dag G + g_2(\nu)^\dag H = e(\nu),
\end{align*}
where $e(\nu) = 1$ for the entry corresponding to $\nu$ and $0$ otherwise. This gives
\begin{align*}
\delta_2 = \{ e(\nu)^T + g_2(\nu)^\dag(M_1^T G - H) \}(D-D_1)\{ e(\omega) + (G^\dag M_1 - H^\dag) g_2(\omega)\} + f_1(\nu,\omega) - f(\nu,\omega).
\end{align*}
We can see now that since $e(\nu)^T(D-D_1)e(\omega) = f(\nu,\omega)-f_1(\nu,\omega)$, there is a cancellation, giving
\begin{align*}
\delta_2 &= g_2(\nu)^\dag(M_1^T G - H)(D-D_1)e(\omega) + e(\nu)^T(D-D_1)(G_1^\dag M_1 - H^\dag)g_2(\omega) \\
&+ g_2(\nu)^\dag(M_1^T G - H)(D-D_1)(G_1^\dag M_1 - H^\dag)g_2(\omega).
\end{align*}
This cancellation is the key step. Using matrix norm inequalities, we have
\begin{align*}
|\delta_2| &\leq \| g_2(\nu) \|_2 \| M_1^T G - H \|_2 \| D - D_1 \|_2 \| e(\omega) \|_2 \\
           & + \| g_2(\omega) \|_2 \| M_1^T G - H \|_2 \| D - D_1 \|_2 \| e(\nu) \|_2 \\
           & + \| g_2(\omega) \|_2 \| g_2(\nu) \| \| M_1^T G - H \|_2^2 \| D - D_1 \|_2 \| e(\nu) \|_2
\end{align*}
Since $g_2(\omega)$ is of length $n_2$ and has entries $n^{-1/2} \exp(2\pi i\omega \cdot x)$, $\| g_2(\omega) \|_2 = \sqrt{n_2/n}$. Clearly, $\| e(\omega) \| = 1$ because of its definition, and $\| D - D_1 \|_2 = \Delta_1$ because both $D$ and $D_1$ are diagonal with diagonal entries holding $f(\nu,\nu)$ and $f_1(\nu,\nu)$, respectively. This leaves
\begin{align*}
\| M_1^T G - H \|_2^2 = \rho\Big( (M_1^T G - H)( G^\dag M_1 - H^\dag ) \Big) = \rho\Big( M_1^T M_1 + I \Big),
\end{align*}
because $G G^\dag = I$, $H H^\dag = I$, $H G^\dag = 0$, and $G H^\dag = 0$. Thus, the squared 2-norm is 1 plus the largest eigenvalue of $M_1^T M$, which is
\begin{align*}
\| M_1^T M_1 \|_2 = \| M_1 \|_2^2 = \| A_1^{-1} B_1 \|_2^2 \leq \frac{f_{1,\max}}{f_{1,\min}},
\end{align*}
with the last inequality following from the proof of Theorem \ref{theoremB}. Bringing this all together gives
\begin{align*}
|\delta_2| &\leq 2 \sqrt{\frac{m-n}{m}}\sqrt{1 + \frac{f_{1,\max}}{f_{1,\min}}}\Delta_1 + \frac{m-n}{m}\Big( 1 + \frac{f_{1,\max}}{f_{1,\min}} \Big) \Delta_1 \\
 &= O\Big( \sqrt{\frac{m-n}{m}}  \Delta_1 \Big),
\end{align*}
establishing the second term of the Theorem.

\end{proof}

\bibliography{refs}{}
\bibliographystyle{apalike}

\end{document}